\newtheorem{theorem}{Theorem}
\newtheorem{lemma}[theorem]{Lemma}
\newtheorem{corollary}[theorem]{Corollary}
\newtheorem{problem}[theorem]{Open problem}
\newtheorem{conjecture}[theorem]{Conjecture}
\newcommand{\R}{{\mathbb{R}}}
\newcommand{\N}{{\mathbb{N}}}
\newcommand{\G}{\ensuremath{\mathcal{G}}}
\newcommand{\OO}{\ensuremath{\mathcal{O}}}
\newcommand{\pth}[1]{\ensuremath{\left(#1\right)}}
\newcommand{\pbrcx}[1]{\ensuremath{\left[#1\right]}}
\newcommand{\ceil}[1]{\ensuremath{\left\lceil#1\right\rceil}}
\newcommand{\Prob}[1]{\mathbb{P}\pbrcx{#1}}
\newcommand{\Ex}[1]{\mathbb{E}\pbrcx{#1}}
\DeclareMathOperator{\area}{area}
\newcommand{\ljp}{57}
\title{On Order Types of Random Point Sets\thanks{Funded by grant
    ANR-17-CE40-0017 of the French National Research Agency (ANR
    project ASPAG). This work was initiated during the ALEA 2013
    conference and the 15th INRIA–McGill–Victoria Workshop on
    Computational Geometry at the Bellairs Research Institute.}}
\author{
  Olivier Devillers\thanks{Université  de Lorraine, CNRS, Inria, LORIA, F-54000 Nancy, France. \texttt{Olivier.Devillers@inria.fr}}
  \and Philippe Duchon\thanks{LaBRI, Université de Bordeaux, CNRS, Bordeaux INP, F-33504 Talence, France. \texttt{philippe.duchon@u-bordeaux.fr}}
  \and Marc Glisse\thanks{Inria, Centre de recherche Saclay-Île-de-France, France. \texttt{Marc.Glisse@inria.fr}}
  \and Xavier Goaoc\thanks{Université  de Lorraine, CNRS, Inria, LORIA, F-54000 Nancy, France. Partially funded by Institut Universitaire de France. \texttt{xavier.goaoc@loria.fr}}}
\begin{document}
\maketitle

\begin{abstract}
A simple method to produce a random order type is to take the order
type of a random point set. We conjecture that many probability
distributions on order types defined in this way are heavily
concentrated and therefore sample inefficiently the space of order
types. We present two results on this question. First, we study
experimentally the bias in the order types of $n$ random points chosen
uniformly and independently in a square, for $n$ up to $16$. Second,
we study algorithms for determining the order type of a point set in
terms of the number of coordinate bits they require to know. We give
an algorithm that requires on average $4n \log_2 n+O(n)$ bits to
determine the order type of $P$, and show that any algorithm requires
at least $4n \log_2 n - O(n \log\log n)$ bits. This implies that the
concentration conjecture cannot be proven by an ``efficient encoding''
argument.
\end{abstract}

\setcounter{page}{0}

\clearpage

\section{Introduction}

An order type is a combinatorial abstraction of a finite point
configuration that already determines which subsets are in convex
position and which pairs define intersecting segments. (Hence, the
order type of a point set $P$ encodes the convex hull, the convex
pealing structure, the triangulations of~$P$, or, for instance, which
graphs admit straight-line embeddings with vertices mapped to~$P$.) In
this paper, we study the problem of producing random order types
efficiently and with limited bias.

\subsection{Context}\label{s:context}

The \emph{orientation} of a triple $(a,b,c) \in (\R^2)^3$ is the sign
of the determinant
\[ \left| \begin{matrix} a_x & b_x & c_x
  \\ a_y & b_y & c_y \\ 1 & 1 & 1 \end{matrix}\right|, \qquad \text{where $a_x$ is the $x$-coordinate of $a$, etc.}\]
This sign is $-1$ if the triangle $abc$ is oriented clockwise (CW),
$0$ if it is flat, and $1$ if it is oriented counterclockwise
(CCW). Two sequences $P = (p_1,p_2,\ldots, p_n) \in (\R^2)^n$ and $Q =
(q_1,q_2, \ldots, q_n) \in (\R^2)^n$ \emph{have the same chirotope} if
for every indices $i,j,k$ the triples $(p_i,p_j,p_k)$ and
$(q_i,q_j,q_k)$ have the same orientation. A related notion is for two
finite subsets $P$ and $Q$ of $\R^2$ to \emph{have the same order
type}, meaning that there exists a bijection $f: P \to Q$ that
preserves orientations. Having the same order type (resp. chirotope)
is an equivalence relation, and an
\emph{order type} (resp. a \emph{chirotope}) is an equivalence class for
that relation. An order type or chirotope is \emph{simple} if it can
be realized without three collinear points. These definitions extend
readily to $\R^d$, but we consider here only the planar, simple case.

\paragraph{Order types VS chirotopes.}

The questions we are interested in are usually oblivious to the
labeling of the points, and are therefore phrased in terms of order
types. Our methods do, however, make explicit use of the labeling of
the points, so our results are stated in terms of chirotopes for the
sake of precision. We therefore use one or the other notion depending
on the context. They are related since an order type of size $n$
corresponds to at most $n!$ chirotopes, possibly fewer if some
bijections of the point set into itself preserve orientations.

\paragraph{Enumerating order types.}

There are finitely many order types of size $n$, so, in principle,
some properties of planar point sets of small size can be studied by
sheer enumeration of order types.\footnote{Here is an example, coming
from geometric Ramsey theory, of such a ``constant size'' open
question. Gerken~\cite{gerken2008empty} proved that any set of at
least $1717$ points in the plane without aligned triple contains an
\emph{empty hexagon}: six points in convex position with no other
point of the set in their convex hull. The largest known point set
with no empty hexagon has size $29$ and was found decades
ago~\cite{overmars2002finding}.} In practice, order types were
enumerated (up to possible reflexive symmetry) up to size $11$ by
Aichholzer \textit{et al.}~\cite{aichholzer2002enumerating}. They used
their database for instance to establish sharp bounds on the minimum
and maximum numbers of triangulations on $10$ points, a very finite
result that they could bootstrap into an asymptotic bound. The number
of order types of size $n$ does, however, quickly become overwhelming
as $n$ increases: it reaches billions already for $n=11$, and grows at
least as $n^{3n+o(n)}$ since the number of chirotopes grows as
$n^{4n+\Theta\pth{\frac{n}{\log
n}}}$~\cite[Theorem~4.1]{alon1986number}.  It is thus unlikely that
the order type database will be extended much beyond size
$11$.\footnote{In particular, the geometric Ramsey theory problem
above seems out of reach of enumerative methods.}

\subsection{Questions}

When dealing with configuration spaces too large to be enumerated, it
is natural to fall back on random sampling methods. Two desirable
properties of a random generator of order types are that it be 
efficient (a random order type can be produced quickly, say in time
polynomial in $n$) and reasonably unbiased (it will explore a
reasonably large fraction of the space of order types).

\paragraph{Challenges.}

Designing an efficient and reasonably unbiased random generator of order
types may prove difficult because of two properties of order types. On
the one hand, order types enjoy small combinatorial encodings, even of
subquadratic size~\cite{cardinal_et_al:LIPIcs:2018:8733}, but the set
of order types is difficult to describe: already deciding membership
is NP-hard~\cite{shor1991stretchability}. On the other hand, order
types can be manipulated through point sets realizing them, so that
one needs not worry about remaining in the space of order types, but
there are order types of size $n$ for which any realization requires
$2^{\Omega(n)}$ bits per coordinate~\cite{goodman1990intrinsic}.

\paragraph{Concentration.}

Let us illustrate what we consider \emph{un}reasonable bias. Let $m_n$
be a sequence of positive integers with $m_n \to \infty$, and let
$\mu_n$ be a probability measure on the set of order types of size
$m_n$. Say that $\{\mu_n\}_{n \in \N}$ exhibits \emph{concentration}
if there exists for each $n$ a set $S_n$ of order types of size $m_n$
such that $S_n$ contains a proportion $\epsilon_n\to 0$ of all order
types, while $\mu_n(S_n)\to 1$. In other words, $\mu_n$ and the
counting measure on order types of size $m_n$ are ``asymptotically
singular''. In fact, little seems known already on the following
question.
 
\begin{problem}\label{op}
  Does there exist a sequence of measures $\mu_n$ on order types of
  size $m_n$ such that (i) no subsequence exhibits
  concentration, and (ii) a random order type of size $m_n$ according to
  measure $\mu_n$ can be produced in time polynomial in $n$?
\end{problem}

\paragraph{Random point sets.}

It is easy to produce a random order type by first generating a random
point set, then reading off its order type, but let us stress that it
is not clear how the probability distribution on point sets translates
into a probability distribution on order types.

\begin{problem}\label{op2}
   How biased is the order type of a set of points sampled from a
   planar measure (say uniform on a square)?
\end{problem}

\noindent
When sampling points independently and from a probability distribution
whose support has non-empty interior, every order type appears with
positive probability.  Indeed, every order type can be realized on an
integer grid~\cite{goodman1990intrinsic} and order types are unchanged
under rescaling and sufficiently small perturbation. One may
still expect some bias, if only because some order types
require exponential precision for their
realization~\cite{goodman1990intrinsic} and are thus more brittle
than others. For order types of small size, bias was proven to be
unavoidable~\cite[Prop. 2]{goaoc2015limits}.

\subsection{Results}

For the sake of clarity, we state and prove our results for
a \emph{uniform sample of a square}, understood as a
\emph{sequence} of random points chosen independently and uniformly in
$[0,1]^2$ (the choice of which square does not affect the
distribution). We comment in Section~\ref{s:conclusion} to what extent
our methods generalize. We write $\log$ to mean the logarithm of base
$2$.

\paragraph{Experiments.}

Our first contribution (Section~\ref{s:exp}) is some experimental
evidence that sampling random point sets uniformly and independently
in a square explores \emph{very inefficiently} the space of order
types for $n$ up to $16$. This prompts:

\begin{conjecture}\label{c:conc}
  Let $\mu_n$ denote the probability distribution on order types of
  size $n$ given by uniformly sampling a square. The sequence
  $\{\mu_n\}_{n\in \N}$ exhibits concentration.
\end{conjecture}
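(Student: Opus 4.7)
The plan is to exhibit, for each $n$, a set $S_n$ of order types of size $n$ that captures the typical shape of a sample from $\mu_n$ while having vanishing relative density in $\Omega_n$, the set of all order types of size $n$. The natural invariant to target is the size of the convex hull, the simplest statistic for which uniform random point samples are known to concentrate. Concretely, I would take
\[ S_n = \{\tau \in \Omega_n : \tau \text{ has convex hull of size at most } (\log n)^2\}. \]

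The probabilistic side is classical. By the R\'enyi--Sulanke asymptotics, the number $H_n$ of convex hull vertices of $n$ i.i.d.\ uniform points in $[0,1]^2$ satisfies $\Ex{H_n} = \Theta(\log n)$, and subsequent concentration results due to Groeneboom, Reitzner, and B\'ar\'any--Vu give $H_n = O(\log n)$ with overwhelming probability. This immediately yields $\mu_n(S_n) \to 1$. If convex hull size turned out not to discriminate the class tightly enough, one could refine $S_n$ using the full convex peeling sequence, whose layer sizes have well-understood asymptotics and concentration, or combine several such statistics.

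The hard step is to show $|S_n|/|\Omega_n| \to 0$. Since $|\Omega_n| \ge n^{3n + o(n)}$, it suffices to prove $|S_n| \le n^{(3-\delta)n}$ for some fixed $\delta > 0$. I would attempt this by a one-point extension argument: build chirotopes by inserting labelled points one at a time, distinguishing insertions that create a new convex hull vertex (``outer'' insertions, with polynomially many options in $n$) from insertions that land strictly inside the current hull (``inner'' insertions, potentially with $\Theta(n^4)$ cells in the dual arrangement). Forcing the final hull to have at most $(\log n)^2$ vertices constrains the interplay between outer insertions and later ``covering'' insertions, and one can hope to show that the consistency requirement cuts the average per-step cost of an inner insertion below $n^3$, saving a constant factor in the exponent.

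The main obstacle will be carrying out this enumeration rigorously. No non-trivial upper bound is currently known on the number of order types with a prescribed convex hull size, and the encoding lower bound of $4n \log n - O(n \log\log n)$ bits proved in this paper shows that no distribution-oblivious compression argument can separate $\mu_n$-typical order types from the bulk---the encoding would have to exploit the small-hull hypothesis structurally, not merely the concentration of $\mu_n$. A proof thus seems to require a genuinely new combinatorial tool, for instance one that uses the exchangeability of the point labels under $\mu_n$ to reduce the enumeration of small-hull order types to a counting problem on interior configurations inside a nearly deterministic outer frame. To my knowledge no such reduction has been carried out, which is consistent with the statement being offered here as a conjecture rather than a theorem.
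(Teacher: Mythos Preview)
The statement is a \emph{conjecture}, and the paper does not prove it. The paper offers experimental evidence in its support (Section~\ref{s:exp}) and then proves Theorem~\ref{th:unif}, whose point is precisely that one natural route to the conjecture---an efficient-encoding argument---is blocked. There is thus no proof in the paper to compare your proposal against; you are correct that the statement is left open.

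Your proposal is honestly presented as a strategy with an acknowledged gap, and the probabilistic half ($\mu_n(S_n)\to 1$ via R\'enyi--Sulanke and concentration for $H_n$) is indeed routine. Let me sharpen the concern about the combinatorial half beyond ``no tool is known'': it is not clear that the target inequality $|S_n|\le n^{(3-\delta)n}$ is even \emph{true}. For your approach to work you implicitly need a uniformly random order type (under the counting measure) to have a large convex hull with non-vanishing probability. Nothing currently rules out that a $1-o(1)$ fraction of all order types already have hull size $O(\mathrm{polylog}\,n)$, in which case $|S_n|/|\Omega_n|\not\to 0$ and the hull statistic fails to separate the two measures. Your own insertion heuristic in fact points this way: an inner insertion faces up to $\Theta(n^4)$ cells of the line arrangement, while an outer insertion is far more constrained, so small-hull configurations plausibly \emph{dominate} the count rather than being rare. (At the extreme, there is exactly one order type with all $n$ points in convex position.) If that is the case, convex hull size---and likely the peeling profile as well, for the same reason---cannot witness concentration, and one would need a statistic on which $\mu_n$ and the counting measure \emph{provably} disagree. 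Identifying such a statistic is essentially the content of the conjecture, so the proposal, while a natural first attempt, does not yet isolate a workable invariant.
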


\paragraph{Algorithms.}

Recall that the number of chirotopes grows as
$n^{4n+\Theta\pth{\frac{n}{\log n}}}$. An ``entropic'' approach to
proving (the chirotopal analogue of) Conjecture~\ref{c:conc} could
thus be to find an algorithm that reads off the chirotope of a uniform
sample of a square using with high probability at most $c n \log n$
random bits, for some $c<4$. Formally, we consider a discrete model of
computation (e.g., a Turing machine), \emph{not} the real-RAM machine
customary in computational geometry, where reading the coordinates has
a cost (specifically, accessing the next bit in one of these strings
has unit cost) and any other computation is considered free. A random
point set is then given in the form of $2n$ infinite binary strings,
one per point coordinate\footnote{Recall that any real $r \in [0,1]$
has a binary development of the form $0.r_1r_2\ldots$ with
$r_i \in \{0,1\}$, so we can identify $r$ with the sequence
$r_1r_2\ldots \in \{0,1\}^\N$. (In particular, the real $1$ is
identified with the sequence $1^{\N}$; for dyadic reals, which have
two representations, we can choose any.)} and we want to determine its
chirotope efficiently most of the time. Our second contribution
establishes that such an approach fails:

\begin{theorem}\label{th:unif}
  Let $P$ be a uniform sample of size $n$ in $[0,1]^2$.

  \begin{itemize}
  \item[(i)] Any algorithm that determines the chirotope of $P$ reads
    on average at least $4n \log n-O(n\log\log n)$ coordinate bits.

    \medskip
    
  \item[(ii)] There exists an algorithm that determines the chirotope
    of $P$ by reading on average $4n \log n+ O(n)$ coordinate bits.
  \end{itemize}
\end{theorem}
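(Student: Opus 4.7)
I would design an incremental algorithm that reads coordinate bits on demand. Once the chirotope of $p_1,\dots,p_{i-1}$ is known, inserting $p_i$ amounts to locating $p_i$ in one cell of the arrangement $\mathcal{A}_{i-1}$ of the $\binom{i-1}{2}$ lines through pairs of the earlier points. Since $\mathcal{A}_{i-1}$ has $\OO(i^4)$ cells, an oracle revealing this cell would use at most $4\log i+\OO(1)$ bits per insertion, totalling $\sum_{i=1}^n(4\log i+\OO(1))=4\log(n!)+\OO(n)=4n\log n+\OO(n)$. The concrete strategy is: for each orientation test $(p_i,p_j,p_k)$ with $j,k<i$ not yet decided by the bits already read, request one more bit from whichever of the three involved coordinate reals has so far been read to the lowest precision, and stop when the sign of the corresponding $3\times 3$ determinant is forced. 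The main technical step is to show that the expected overhead per insertion, above the information-theoretic $4\log i$ bits, is $\OO(1)$; this should follow from the standard estimate that a single orientation test on three independent uniform points resolves in $\OO(1)$ coordinate bits in expectation, combined with an amortisation argument showing that bits already read for one test are essentially free for later tests sharing coordinates.

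\textbf{Part (i), reduction to entropy.} Any algorithm determining the chirotope can be modelled as a binary decision tree whose internal nodes each query a single coordinate bit. Under the uniform distribution on $[0,1]^{2n}$ the coordinate bits are i.i.d.\ fair coins, so the expected number of bits queried equals the Shannon entropy of the random leaf. Since each leaf is labelled by a single chirotope, this entropy is at least $H(C)$, the entropy of the output chirotope $C$, and it suffices to prove $H(C)\geq 4n\log n-\OO(n\log\log n)$. I would do so via the standard inequality $H(C)\geq-\log\max_c\Prob{C=c}$, reducing the statement to
\[
  \max_c\Prob{C=c}\;\leq\;n^{-4n}\cdot(\log n)^{\OO(n)}.
\]

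\textbf{Part (i), the core estimate.} The main obstacle is therefore to bound, uniformly over chirotopes $c$, the $2n$-dimensional Lebesgue measure of the realisation region $R_c=\{P\in[0,1]^{2n}:\mathrm{chir}(P)=c\}$. Realisation regions are complicated semi-algebraic sets---already deciding $R_c\neq\emptyset$ is NP-hard~\cite{shor1991stretchability}---so the argument must be insensitive to their precise shape. My plan is an iterated Fubini: integrating over $p_n$ first, the volume of $R_c$ equals the expectation over $(p_1,\dots,p_{n-1})$ of the area of the cell of $\mathcal{A}_{n-1}$ inside which $p_n$ must fall for the induced sub-chirotope to extend to $c$; iterating the same step for $i=n,n-1,\dots,1$ bounds $\mathrm{vol}(R_c)$ by a product $\prod_{i=1}^n A_i^\star$, where $A_i^\star$ controls the worst cell area that can arise at step $i$. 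The heart of the proof is to show that, with overwhelming probability over the random arrangement of $i-1$ uniform points, every cell of $\mathcal{A}_{i-1}$ has area at most $\OO(\mathrm{polylog}(i)/i^4)$; handling the rare configurations whose arrangement has an anomalously large cell via a truncation introduces a polylogarithmic loss per step, and summing $-\log A_i^\star\geq 4\log i-\OO(\log\log i)$ over $i$ gives the target $4n\log n-\OO(n\log\log n)$.
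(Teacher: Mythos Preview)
Your Part~(i) contains a fatal flaw: the min-entropy bound you reduce to is simply false. Consider the chirotope in which the labelled points $1,\ldots,n$ lie in convex position in a prescribed cyclic order. By Valtr's exact formula, the probability that $n$ uniform points in a square are in convex position is $p(n)=\bigl(\tfrac1{n!}\binom{2n-2}{n-1}\bigr)^2=n^{-2n+o(n)}$; by symmetry this mass is shared equally among the $(n-1)!$ convex-position chirotopes, so each one has probability $n^{-3n+o(n)}$. Hence $-\log\max_c\Prob{C=c}\le 3n\log n+o(n\log n)$, and no argument routed through $H(C)\ge -\log\max_c\Prob{C=c}$ can ever reach $4n\log n$. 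Your iterated-Fubini plan is aimed at proving $\max_c\Prob{C=c}\le n^{-4n}(\log n)^{O(n)}$, which is therefore unachievable; the ``max cell area $\le\mathrm{polylog}(i)/i^4$'' claim must fail somewhere (indeed, when $p_1,\ldots,p_{i-1}$ are near convex position, the cell of $\mathcal A_{i-1}$ containing the next convex-position point is typically much larger than $i^{-4}$).

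The paper avoids entropy entirely. It proves a \emph{pointwise} lower bound: for every point set $P$ and every index $i$, any algorithm must read at least $L(i)-1$ bits of each coordinate of $p_i$, where $L(i)$ is the smallest $k$ such that some axis-parallel segment of length $2^{-k}$ from $p_i$ avoids all lines $p_ap_b$. This is a deterministic statement about $P$; the probabilistic work is to show $\Prob{L(1)\ge 2\log n-x\log\log n}\ge 1-2^{-cn}$, which the paper does by a bichromatic birthday-paradox argument (sector the plane around $p_1$ into $s\approx n^2/\log^x n$ wedges and find blue/red occupied adjacent sectors). This gives $\mathbb E[L(i)]\ge 2\log n-O(\log\log n)$ for each $i$, and summing yields $4n\log n-O(n\log\log n)$. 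The key point is that this bound can exceed $H(C)$: it reflects the mismatch between dyadic coordinate bits and the chirotope partition, not just the information content of the output.

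For Part~(ii) your algorithm is essentially the paper's greedy rule (refine a coarsest point in an undetermined triple), but your analysis is a sketch. The ``information-theoretic $4\log i$ plus $O(1)$ overhead per insertion, via amortisation'' is exactly the hard part, and a single-orientation-test estimate plus unspecified amortisation does not obviously control the $\Theta(i^2)$ tests at step $i$. The paper instead defines $U(i)$ as the smallest $k$ such that no line meets the three $G_{2^k}$-cells of $p_i,p_a,p_b$ for any $a,b\neq i$, shows the greedy algorithm reads at most $U(i)$ bits per coordinate of $p_i$, and bounds $\mathbb E[U(1)]\le 2\log n+O(1)$ via an explicit area bound on the ``butterfly'' $B_m(p,q)$ together with a union bound over pairs. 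That butterfly estimate (area $\le O(1/(m\,\delta(p,q)))$) is the concrete replacement for your amortisation step.
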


\noindent
We prove Theorem~\ref{th:unif} in two steps. First,
Section~\ref{s:setup} answers the questions listed above for
an \emph{arbitrary} point set $P$ in terms of two statistics ($L$ and
$U$) of that point set. Sections~\ref{s:U} and~\ref{s:L} then make a
probabilistic analysis of these statistics for our random point sets.

\paragraph{Another approach.}

Our proof of Theorem~\ref{th:unif}~(ii) uses a similar argument
(namely Lemma~\ref{l:dU}) as the following result of Fabila-Monroy and
Huemer~\cite{fabila2017order}: with probability at least
$1-O\pth{n^{-\epsilon}}$, a uniform sample of a square of size
$n$ can be rounded to the regular grid of step $n^{-3-\epsilon}$
without changing its chirotope. Can most chirotopes be realized on a
$O\pth{n^{3+\epsilon}}\times O\pth{n^{3+\epsilon}}$ regular grid? A
negative answer would prove Conjecture~\ref{c:conc}. Unfortunately,
the best bounds that we are aware of, due to Caraballo et
al.~\cite{caraballo2018number}, do not settle this question: they only
assert that the number of chirotopes of resolution $n^{-3-\epsilon}$
is at least $n^{3n- O(n \log \log n/\log n)}$, whereas the number of
chirotopes is $n^{4n+\Theta\pth{\frac{n}{\log n}}}$.

\section{Experimental study of order types of random point sets}
\label{s:exp}

In this section, we probe experimentally the probability distribution
of order types of uniform samples of a square. Note that the number of
order types is about $28$ million for size 10, between $2.3$ and $4.7$
billion for size 11 (see Appendix~\ref{a:exp}), and unknown for $n \ge
12$.

\paragraph{Setup.}

Our first experiment is to produce a large number $N$ of point sets,
stopping after each million samples to record the empirical
distribution of order types. We repeated this experiment 80 times for
size $10$ (for $N=1$ billion) and 20 times for size $11$ (for $N=450$
million). For size $12$, we ran out of memory before getting useful
information (we used machines with 16 to 64 gigabytes of memory.)

It seems plausible that the expected number of samples needed to reach
a repetition provides some insight on how concentrated that measure
is; for comparison, this expectation is $\Theta(\frac1{\sqrt{k}})$ for a
uniform measure on $k$ elements. We thus set up a second experiment
where we produce point sets until we reach the first repetition of an
order type. We repeated this experiment 10000 times for each
size from $10$ to $14$, 5468 times for size $15$ and 1000 times for
size $16$.

Due to lack of space, we defer the discussion of technical issues to
Appendix~\ref{a:exp}.

\paragraph{Data.}

We present here synthetic views of our experimental results.

\begin{figure}[!h]
\includegraphics[page=11]{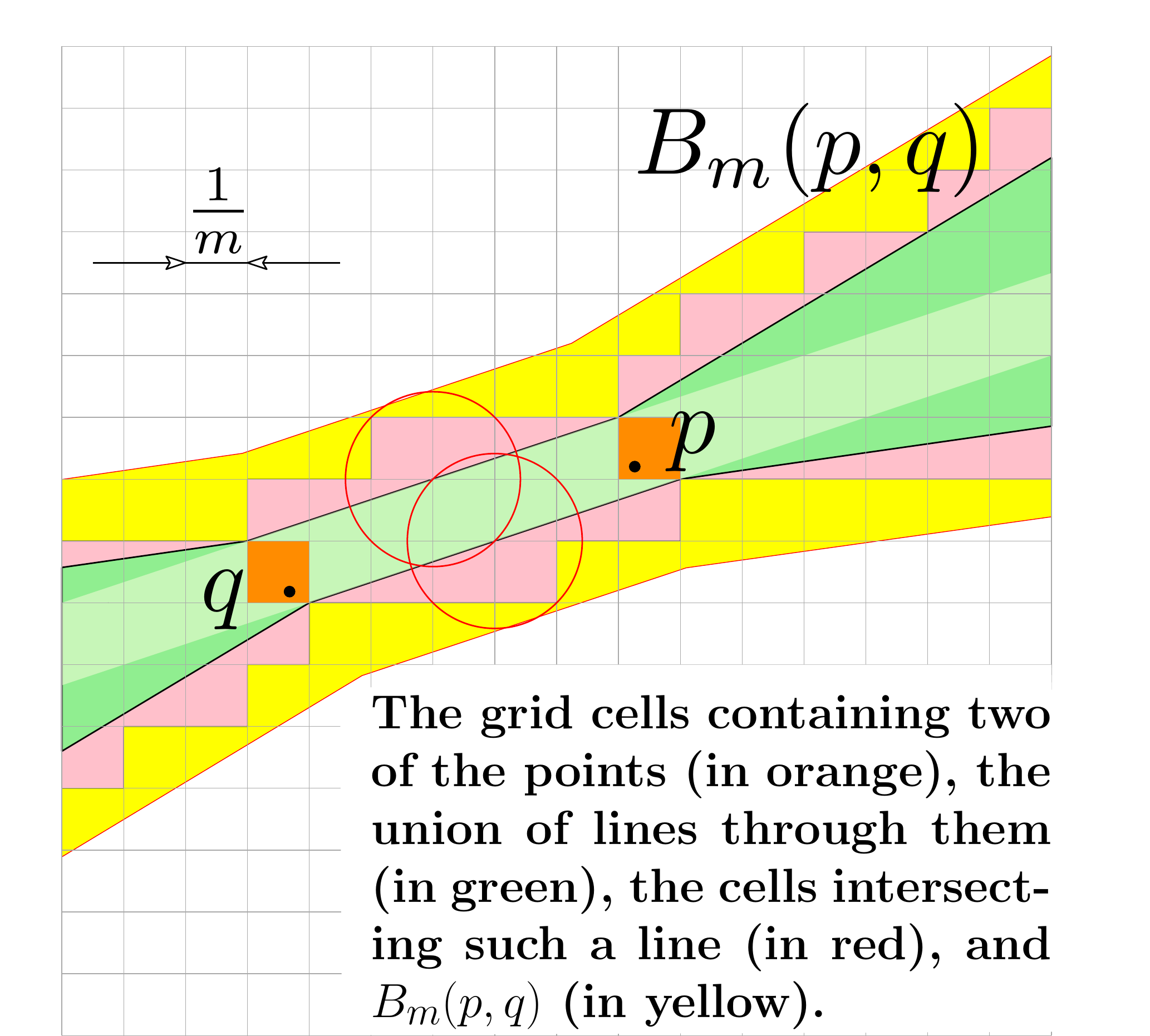}
\caption{Results for the 1st experiment, averaged over 80 trials for size 10 and 20 trials for size 11. Left: Number of occurrences (average $\pm$ standard deviation) for the $k$th most frequent order types for $k=1, 2, \ldots, 1000$. Right: The proportion of new order types found (the scale is logarithmic on the $y$-axis). The table gives some triples $(x\%, s, d)$, meaning that after $s$ million samples, $d$ million distinct order types were found and $x\%$ of the last million samples were new ones.
\label{f:res}}
\end{figure}

\vspace{-0.3cm}
\begin{table}[!h]
\begin{center}
\begin{tabular}[b]{|c|c|c|c|c|c|c|c|}
\hline
Size & 10 & 11 & 12 & 13 & 14 & 15 & 16 \\
\hline
Average & 466& 2\ 716& 18\ 788& 156\ 372& 1\ 521\ 365& 17\ 134\ 843& 218\ 060\ 427\\
\hline
Median & 432 & 2\ 546& 17\ 540& 147\ 266& 1\ 429\ 508& 16\ 027\ 384& 203\ 340\ 042\\
\hline
\end{tabular}
\end{center}
\caption{Results for the 2nd experiment: time of first collision
(averaged over 10\ 000 trials for size 10 to 14, 5468 for size 15 and
1000 for size 16).\label{t:res2}}
\end{table}

\paragraph{Discussion.}

For size 10 and 11, the empirical frequencies of the most popular
order types are $5.6\ 10^{-4}$ and $7.3 \ 10^{-5}$, which are several
orders of magnitude above the corresponding uniform probability ($3.5\
10^{-8}$ and about $4\ 10^{-10}$, respectively). This behavior
persists, as even the 1000th popular order type remains $3$ to $4$
orders of magnitude more frequent than for the uniform
behavior. Notice that (Figure~\ref{f:res} right) the rate at which new
order types are discovered collapses quickly: for size 10, after
seeing $\sim 2.2\ 10^6$ distinct order types, in the next million
samples only $10\%$ produce a new order type; this means that
$\sim7.7\%$ of the order types of size 10 account for $90\%$ of the
mass. The situation seems similar for size 11. Altogether, this
suggests that uniform samples of a square explore very inefficiently
the space of order types.

This first assessment may seem weakly justified as it is based on mere
averages. We do not provide a statistical analysis of these
estimators, but note that the random variable counting the number of
distinct order types seen after $t$ samples is a sum of $t$ Bernoulli
variables that are not independent, but are \emph{negatively
associated} in the sense explained in Section~\ref{s:Lbb}. This
variable therefore enjoys a Chernoff-type tail estimate, and can be
accurately estimated through averaging over a reasonably small number
of samples. This is consistent with the rather small standard
deviations observed on our samples. We thus believe that these
empirical averages represent the situation quite fairly.

\bigskip

Our second experiment indicates that for size $12$ to $16$ the time of
first collision remain orders of magnitude smaller than what it would be
for a uniform distribution. Indeed, let us write $T_n$ for the number
of order types of size $n$, and speculate on the value of $\sqrt{T_n}$
($T_n$ is unknown for $n \ge 12$). For $n=10$ this is $5348$; the
ratio $\frac{T_{n+1}}{T_{n}}$ increases regularly as $n$ ranges from
$5$ to $10$, and is about $160$ for $n=10$. Assuming that it does not
decrease, the value $\sqrt{T_n}$ should grow by a factor at least
$12$, and likely much more. The average empirical first collision time
grows by smaller factors: $5.8$, $6.9$, $8.3$, $9.7$, $11.2$ and $12.9$.

\bigskip

Let us sketch a more refined analysis. Let $p_{n,1}, p_{n,2}, \ldots,
p_{n,T_n}$ denote the probabilities of the various order types in a
random sample of a square, in non-increasing order. Let $\mu_n$ denote
the probability distribution on $[T_n]$ such that $\Prob{\mu_n=i} =
p_{n,i}$. Let $C_{\mu_n}$ denote the time of first collision for
$\mu_n$ and let us identify $\mu_n$ to the vector $(p_{n,1},
p_{n,2}, \ldots, p_{n,T_n})$. Camarri and
Pitman~\cite[Corollary~5]{CamPit00} proved that
$\frac{C_{\mu_n}}{\|\mu_n\|_2}$ asymptotically follows the Rayleigh
distribution with density $x \exp(-x^2/2)$ if and only if
$\|\mu_i\|_{\infty} = o\pth{\|\mu_i\|_2}$. We found (by hand) a
scaling of the times of first collision obtained experimentally so
that their distribution seems to fit

\noindent
\begin{minipage}{7cm}
 that Rayleigh distribution (see figure on the right); a
Kolmogorov-Smirnov test confirms that for $n=11$ to $16$, these
normalized data are consistent with such a convergence. The hypothesis
is asymptotic in nature and we only sampled order types up to size 16,
but given that there are already billions of order types for size 11,
this provides some (weak) evidence in favor of $\|\mu_i\|_{\infty} =
o\pth{\|\mu_i\|_2}$. Assuming this indeed holds, $\sqrt{2/\pi}
C_{\mu_n}$ is an asymptotically unbiased estimator for $\|\mu_i\|_2$
(the constant factor comes from the mean of the Rayleigh
distribution).
\end{minipage}
\hfill
\begin{minipage}{7cm}
\includegraphics[page=12]{Figures}
\end{minipage}
\hfill

\bigskip

Does $\|\mu_i\|_{\infty} = o\pth{\|\mu_i\|_2}$ relate to
concentration? On one hand, it is compatible with $\mu_i$ being
uniform on $i$ elements. On the other hand, if $\mu_i$ charges
uniformly $n_i = o(i)$ elements for a total of $1-\frac1i$ and
uniformly $i-n_i$ elements for a total of $\frac1i$, the condition is
only satisfied for $n_i = \omega\pth{\sqrt{i}}$. Perhaps this
condition prevents too sharp a concentration.

\section{Analysis of arbitrary point sets}
\label{s:setup}

We first introduce one algorithm and two statistics to analyze the
information needed to determine the order type of an \emph{arbitrary}
set $P$ of $n$ points in the unit square, no three aligned.

\paragraph{Grids and orientations.}

Let $G_m$ denote the partition of $[0,1]^2$ into $m\times m$ square
cells of side length $\frac1m$ where the interior of each cell is of
the form $\pth{\frac im,\frac{i+1}m} \times\pth{\frac jm,\frac{j+1}m}$
with $0 \le i,j < m$. We often set $m=2^k$, so that knowing the first
$k$ bits of both coordinates of a point amounts to knowing which cell
of $G_m$ contains it. Remark that knowing three points up to $k$ bits
(for each coordinate) suffices to determine their orientation if and
only if the corresponding three cells of $G_m$ cannot be intersected
by a line.

\paragraph{Greedy algorithm.}

The algorithm that we propose for Theorem~\ref{th:unif}~(ii) refines
greedily the coordinates of a point involved in a triangle with
undetermined orientation, until the chirotope can be determined.  We
start with no bit read, so we only know that all points are in the
unit square. At every step, we select one point and read one more bit
for both of its coordinates. So, at every step of the algorithm, we
know for each point some grid cell that contains it; the resolution of
the grid may of course be different for every point. The selection is
done greedily as follows:

\begin{quote}
  \emph{Find three pairwise distinct indices $a,b,c$ such that the
    cells known to contain $p_a$, $p_b$, $p_c$ can be intersected by
    a line, and select one among these points known to the coarsest
    resolution.}
\end{quote}

\noindent
We break ties arbitrarily, so this is perhaps a method rather than an
algorithm. By definition, when the algorithm stops, the chirotope of
$P$ can be determined from the precision at which every point is
known. The algorithm does \emph{not} stop if $P$ contains three
aligned points.

\paragraph{Statistic $U$.}

For $i \in [n]$, we define $U(i)$ as the smallest $k$ such that for
any $a,b \in [n]\setminus \{i\}$, there does not exist a line that
intersects the cells in $G_{2^k}$ that contain $p_a$, $p_b$, and
$p_i$. If $p_i$ is aligned with some two other points of $P$, we let
$U(i) = \infty$. The following implies that our greedy algorithm
terminates if $P$ has no aligned triple.

\begin{lemma}\label{l:U}
  In the greedy algorithm above, independently of how ties are
  resolved, for every $i \in [n]$, at most $U(i)$ bits are read from
  each coordinate of $p_i$.
\end{lemma}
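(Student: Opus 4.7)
The plan is to track, for each point $p_i$, how much information has been accumulated about it, and to argue that once $p_i$ is known to resolution $G_{2^{U(i)}}$, it will never be selected again by the greedy rule.

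First I would set up bookkeeping: after $p_i$ has been selected $t$ times, exactly $t$ bits have been read from each coordinate, so the current knowledge about $p_i$ is a cell of $G_{2^t}$. The algorithm selects $p_i$ at a given step only if there exist indices $a,b\neq i$ such that (1) the currently known cells containing $p_a$, $p_b$, $p_i$ admit a common transversal line, and (2) among those three points, $p_i$ is tied for coarsest resolution.

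The key step is a monotonicity observation. Suppose $p_i$ is selected for the $(t+1)$-st time, so its current resolution is $t$. By (2), $p_a$ and $p_b$ are each known at resolution $\geq t$, meaning the cells currently known to contain them are subsets of some cells $C_a, C_b \in G_{2^t}$ containing $p_a, p_b$ respectively; the cell currently known to contain $p_i$ is exactly its cell $C_i \in G_{2^t}$. A line witnessing (1) passes through the finer cells, hence a fortiori through $C_a$, $C_b$, and $C_i$. Thus there is a line crossing the cells of $G_{2^t}$ containing $p_a$, $p_b$, $p_i$.

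By the definition of $U(i)$, such a line cannot exist once $t \geq U(i)$. Hence $p_i$ can be selected at most $U(i)$ times, giving the claimed bound on the bits read from its coordinates. This argument does not depend on how ties are broken, since the contradiction only uses the coarsest-resolution condition. The main (minor) subtlety is just confirming the cell-refinement step — that any line crossing the current finer cells must also cross their coarser parents in $G_{2^{U(i)}}$ — which is immediate from $G_{2^{k+1}}$ being a refinement of $G_{2^k}$.
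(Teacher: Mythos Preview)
Your proof is correct and essentially identical to the paper's: both argue that if $p_i$ is about to be refined from resolution $t$ to $t+1$, the greedy selection rule forces $p_a,p_b$ to be known at resolution at least $t$, so the witnessing line also intersects the three cells in $G_{2^t}$, which contradicts $t\ge U(i)$. The only difference is indexing (the paper writes ``reading the $k$th bit'' where you write ``selected for the $(t{+}1)$-st time'' with $t=k-1$).
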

\begin{proof}
  Assume that at some point in the algorithm, we read the $k$th bit of
  both coordinates of point $p_i$. To read these bits, our selection
  method requires that there exist $a,b \in [n]\setminus \{i\}$ such
  that (1) in $\{p_a,p_b,p_i\}$, $p_i$ is one of the points known at
  coarsest resolution, and (2) there exists a line intersecting the
  cells known to contain $p_a$, $p_b$, and $p_i$.  Condition~(1)
  ensures that for each of $\{p_a,p_b,p_i\}$, the cell known to
  contain the point is contained in a cell of
  $G_{2^{k-1}}$. Condition~(2) ensures that these cells in
  $G_{2^{k-1}}$ can be intersected by a line. Thus, $k-1 < U(i)$.
\end{proof}

\paragraph{Statistic $L$.}

For $i \in [n]$, we define $L(i)$ as the smallest $k$ such
that \emph{at least one} horizontal or vertical segments of length
$2^{-k}$ starting in $p_i$ is \emph{disjoint} from \emph{all} lines
$p_ap_b$ with $a,b\in [n]\setminus\{i\}$.

\begin{lemma}\label{l:L}
  Any algorithm that determines the chirotope of $P$ must read, for
  every $i$, at least $L(i)-1$ bits of each coordinate of $p_i$.
\end{lemma}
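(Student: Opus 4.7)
The approach is a standard adversary/indistinguishability argument, carried out for the $x$-coordinate of $p_i$; the $y$-coordinate is handled symmetrically with vertical segments. Let $A$ be any algorithm that correctly outputs the chirotope, run $A$ on input $P$, and suppose for contradiction that $A$ reads only $k$ bits of $x_i$ with $k < L(i) - 1$. Then $k + 1 < L(i)$, so by minimality in the definition of $L(i)$ every horizontal or vertical segment of length $2^{-(k+1)}$ starting at $p_i$ meets some line $p_a p_b$ with $a, b \in [n] \setminus \{i\}$. In particular, both horizontal segments (leftward and rightward) of length $2^{-(k+1)}$ from $p_i$ do.

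After $A$ has read $k$ bits of $x_i$, its knowledge of $x_i$ reduces to membership in a dyadic interval $I$ of length $2^{-k} = 2\cdot 2^{-(k+1)}$. Since $x_i \in I$, one of the two subintervals of $I$ obtained by splitting at $x_i$ has length at least $2^{-(k+1)}$; the horizontal segment from $p_i$ of length $2^{-(k+1)}$ in that direction therefore lies inside $I \times \{y_i\}$ and meets some line $p_a p_b$. Pick the first such crossing along this segment and place $x_i'$ just past it, close enough that moving from $p_i$ to $p_i' := (x_i', y_i)$ crosses exactly this line $p_a p_b$ and no other line $p_c p_d$. Such a choice exists because the horizontal line through $p_i$ meets each of the finitely many other lines $p_c p_d$ in at most one point, so after the first crossing there is a strictly positive gap before the next one.

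Let $P'$ be $P$ with $p_i$ replaced by $p_i'$. By construction $x_i' \in I$, so the first $k$ bits of $x_i'$ coincide with those of $x_i$ (the convention chosen in the paper for the dyadic case covers boundary situations); the remaining coordinates are unchanged. Since $A$ is deterministic and every bit it reads on $P$ agrees with the corresponding bit of $P'$, it makes the same sequence of queries on $P'$ and returns the same output. But the chirotopes of $P$ and $P'$ differ, as exactly the orientation of $(p_i, p_a, p_b)$ flipped, contradicting the correctness of $A$. The only real subtlety, which I would treat carefully in the full write-up, is making the ``close enough'' choice of $x_i'$ rigorous; this is a finite avoidance argument and presents no genuine obstacle.
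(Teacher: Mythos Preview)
Your proof is correct and follows essentially the same approach as the paper's: if fewer than $L(i)-1$ bits of a coordinate are read, the dyadic interval of possible positions contains a horizontal segment from $p_i$ long enough (length $\ge 2^{-(L(i)-1)}$) that it must be crossed by some line $p_ap_b$, so the orientation of $(p_i,p_a,p_b)$ is not determined. The only difference is presentational: the paper stops at ``the bits read do not suffice to determine the orientation,'' whereas you carry the argument one step further and explicitly construct an indistinguishable input $P'$ with the opposite sign on that triple.
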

\begin{proof}
  Assume that we know $k$ bits of the $x$-coordinate of the
    point $p_i$. The set of possible positions for $p_i$ then contains
    a horizontal segment $S$ of length $2^{-k}$ containing $p_i$; in
    fact, it would be exactly such a segment if we knew the
    $y$-coordinate of $p_i$ to infinite precision.

  By definition of $L(i)$, the two horizontal segments of
    length $2^{-(L(i)-1)}$ starting in $p_i$ both intersect some line
    $p_ap_b$ with $a,b \in [n]\setminus \{i\}$ (the lines are
    different for the two segments). If $2^{-k} \ge 2\cdot
    2^{-(L(i)-1)}$, then the segment $S$ contains at least one of
    these horizontal segments, and is also intersected by some line
    $p_ap_b$ with $a,b \in [n]\setminus \{i\}$. Since the possible
    positions of $p_i$ contain $S$, this means that the bits read so
    far from $p_i$ do not suffice to determine the orientation of
    the triple $(p_i, p_a, p_b)$, even if $p_a$ and $p_b$ were known to
    infinite precision.

    Conversely, if an algorithm that determines the chirotope of $P$
    reads $k$ bits from the $x$-coordinate of $p_i$, then we must have
    $2^{-k} < 2\cdot 2^{-(L(i)-1)}$, that is $k > L(i)-2$. The same
    argument applies to the $y$-coordinate of $p_i$.
\end{proof}

\paragraph{From here...}

So the minimal number of bits required\footnote{Note that our lower
bound holds for \emph{any} algorithm that determines the chirotope,
provided it reads the bits of each coordinate in order, starting from
the most significant. It is in particular not assumed that the
algorithm always reads as many bits of the two coordinates for a given
point, although our proposed algorithm does respect this condition.}
to determine the chirotope of $P$ is in between $2\sum_{i=1}^n
(L(i)-1)$ and $2 \sum_{i=1}^n U(i)$. We show in the next sections that
both sums equal, at first order, $4n \log n$ on average when $P$ is a
uniform sample of the unit square.

\section{Probabilistic analysis of $U$}\label{s:U}

We now outline an analysis of the random variable $U(\cdot)$ when $P$
is a uniform sample of the unit square. The variables $U(i)$ have the
same distribution, which satisfies

\newcounter{postponedtheorem:ExU}\setcounter{postponedtheorem:ExU}{\value{theorem}}
\begin{lemma}\label{l:ExU}
  $\Ex{U(1)} \le 2\log n + 8$
\end{lemma}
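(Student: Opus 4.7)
}

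The strategy is to bound $\Pr[U(1)>k]=O(n^2/2^k)$ directly --- without an intermediate $\log(1/t)$ factor --- and then sum over $k$. A naive route via the triangle-area estimate $\Pr[\operatorname{area}(\triangle p_1p_ap_b)\le t]=O(t\log(1/t))$ would only yield $\Ex{U(1)}\le 2\log n+O(\log\log n)$, so the main obstacle is obtaining the sharper per-$k$ tail bound. The key trick is to route through a ``middle point'' observation rather than through the single event ``$p_1$ is close to line $p_ap_b$'' (which is actually not implied by the cells admitting a common transversal, because the line can extrapolate $p_1$ arbitrarily far from line $p_ap_b$).

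\emph{Step 1: a strip containing three points.} If a line $\ell$ meets all three cells of $G_{2^k}$ containing $p_1,p_a,p_b$, then each $p_i$ lies within distance $\sqrt{2}/2^k$ of $\ell$ (the diameter of a cell of side $2^{-k}$). Hence $p_1,p_a,p_b$ all lie in a strip of width $w:=2\sqrt{2}/2^k$ around $\ell$.

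\emph{Step 2: the middle point is close to the line through the other two.} In any strip of width $w$ containing three points, one of them lies between the other two along the strip's long axis. After rotating so the strip is horizontal and ordering $x_A\le x_B\le x_C$, linear interpolation shows that the point on line $AC$ at abscissa $x_B$ has ordinate in the same horizontal strip; therefore $B$ is at vertical --- and hence perpendicular --- distance at most $w$ from line $AC$. Combining with Step~1, the event $\{U(1)>k\}$ is contained in
\[
\bigcup_{\{a,b\}\subseteq[n]\setminus\{1\}}\ \bigcup_{i\in\{1,a,b\}}\Bigl\{p_i\text{ is within distance }w\text{ of the line through the other two of }p_1,p_a,p_b\Bigr\}.
\]

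\emph{Step 3: per-triple probability.} For distinct indices $(i,j,\ell)$, conditioning on $p_j,p_\ell$, the set of admissible $p_i$ is a strip of width $2w$ intersected with $[0,1]^2$, whose area is at most $2\sqrt{2}\,w+O(w^2)\le Cw$ for an absolute constant $C$ (say $C=9$). By Fubini, $\Prob{p_i\text{ within }w\text{ of line }p_jp_\ell}\le Cw$.

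\emph{Step 4: union bound and summation.} The union in Step~2 has at most $3\binom{n-1}{2}$ events, each of probability $\le Cw=O(1/2^k)$, so
\[
\Prob{U(1)>k}\ \le\ A\,n^2/2^k\qquad\text{for some absolute constant }A.
\]
Let $K:=\ceil{\log(An^2)}=2\log n+O(1)$. Splitting
\[
\Ex{U(1)}=\sum_{k\ge 0}\Prob{U(1)>k}\le K+\sum_{k>K}\frac{An^2}{2^k}\le K+2,
\]
and tracking the constants ($C\le 9$, $A\le \tfrac{3C}{2}\le 14$) yields $\Ex{U(1)}\le 2\log n+8$, as claimed. The events $U(1)=\infty$ (three collinear points) have probability zero, so do not affect the expectation.
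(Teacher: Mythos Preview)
Your argument is correct and takes a genuinely different route from the paper's. The paper defines the \emph{butterfly} $B_m(p,q)$ --- the locus of points $r$ whose cell, together with those of $p$ and $q$ in $G_m$, admits a common line transversal --- and bounds its area by $\frac{6}{m}+\frac{4}{m\,\delta(p,q)}$ (Lemma~\ref{l:butt}), where $\delta$ is the distance between cell centers. Because this bound depends on $\delta$, the paper's tail estimate (Lemma~\ref{l:dU}) must condition on the cell of $p_2$ and sum over the $\sim 8t$ cells at cell-distance~$t$; the $1/\delta$ factor cancels the linear growth in the cell count, yielding $\Prob{U(1)>k}\le 57\,n^2 2^{-k}$. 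Your middle-point trick sidesteps this distance-dependent analysis entirely: by singling out whichever of the three points lies between the other two along the common strip, you reduce every triple to the uniform event ``one point lies within $w$ of the line through two fixed others,'' whose probability is $O(w)$ regardless of where those two sit. This is more elementary --- no butterfly geometry, no sum over annuli --- at the cost of a harmless factor~$3$ in the union bound for the choice of middle point.

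One bookkeeping slip: the claim $A\le \tfrac{3C}{2}$ drops the factor $2\sqrt{2}$ converting $w$ to $2^{-k}$. With the sharp strip bound (area $\le 2\sqrt{2}\,w$; no $O(w^2)$ term is actually needed) you get $\Prob{U(1)>k}\le 12\,n^2/2^k$, hence $\Ex{U(1)}\le 2\log n+\log 12+2<2\log n+6$, comfortably within the stated $+8$.
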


\noindent\begin{minipage}{0.5\textwidth}
Given two points $p$ and $q$ in $[0,1]^2$, the \emph{butterfly}
$B_m(p,q)$ is the set of positions of a point $r \in [0,1]^2$ such
that the cells of $G_m$ containing these three points do not determine
the orientation of $(p,q,r)$. Formally, $B_m(p,q)$ is the union of all
cells of $G_m$ that intersect a line secant to the cells of $G_m$ that
contain $p$ and~$q$. The random variable $U(i)$ equals the smallest
$k$ such that $\bigcup_{j\neq i}B_{2^k}(p_i,p_j)$ contains no other
point of $P$. We prove Lemma~\ref{l:ExU} by bounding from above the
area of a butterfly in terms of $m$ and the distance $pq$ and applying
a union  bound. Fabila-Monroy and Huemer\cite{fabila2017order} introduced a
very  close notion of butterfly (\emph{c.f.} their sets $F_{i,j}$) to
study how rounding coordinates affects order types. They already performed the analysis we
\end{minipage}
\quad
\begin{minipage}{0.4\textwidth}
\includegraphics[page=1,width=\textwidth]{Figures}
\end{minipage}\smallskip 

\noindent
need~\cite[Lemmas~3 and~4]{fabila2017order}, so we only spell out the
proof of Lemma~\ref{l:ExU} in Appendix~\ref{a:U} for completeness.

\bigskip

We can now prove that our greedy algorithm for deciding the chirotope
of~$P$ reads on average at most $4n \log n + O(n)$ coordinate bits.

\begin{proof}[Proof of Theorem~\ref{th:unif}~(ii)]
  By Lemma~\ref{l:U}, our greedy algorithm reads at most $U(a)$ bits
  from each coordinate of point $p_a$. Thus, using Lemma~\ref{l:ExU},
  the average number of bits used by our algorithm is at most:
  $  2\Ex{\sum_{i=1}^n U(i)} = 2 \sum_{i=1}^n \Ex{U(i)}= 2n \Ex{U(1)} \leq 4n\log n + 16 n.  $
\end{proof}

\section{Probabilistic analysis of $L$}
\label{s:L}

We now outline an analysis of the random variable $L(\cdot)$ when $P$
is a uniform sample of the unit square. Again, the variables $L(i)$
have the same distribution. The key technical result is:

\begin{lemma}\label{l:tailL}
  For every $x>1$, there exists $c>0$ such that $\Prob{L(1) \ge 2 \log
    n - x\log\log n}$ is at least $1-2^{-cn}$.
\end{lemma}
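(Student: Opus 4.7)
The plan is to fix $p_1$ and, by symmetry and a union bound over the four axial directions, reduce to bounding the probability that no line $p_a p_b$ with $a, b \ne 1$ crosses a fixed axial segment $s$ of length $\ell = 2(\log n)^x / n^2$ starting at $p_1$; we want this bad event to have probability at most $2^{-cn}/4$. As a first step, I would compute that for a single uniform pair $(p_a, p_b)$, $\Prob{\text{line } p_a p_b \text{ hits } s} = \Theta(\ell)$: the intersection of the line with $y = p_{1,y}$ has a bounded density near $p_{1,x}$ (uniformly over $p_1$ in the interior), so the probability of landing in an interval of length $\ell$ is $\Theta(\ell)$. Consequently, the expected number of hitting lines is $\Theta(n^2 \ell) = \Theta((\log n)^x)$.

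The main step is to amplify this expectation into an exponential-in-$n$ failure bound. I would proceed by anchor-by-anchor exposure: conditional on the position of an anchor $p_j$, the $(n-2)$ indicators $Y_{jk}$ (for $k \ne 1, j$) that line $p_j p_k$ hits $s$ are i.i.d.\ Bernoulli with parameter $q(p_j)$, where $q(p_j)$ is the area in $[0,1]^2$ of the set of $p_k$ giving a hit. A direct computation yields $q(p_j) \approx \ell \cdot |y_j - p_{1,y}| / \|p_j - p_1\|^2$ for $p_j$ away from $p_1$, and $q(p_j) = \Theta(1)$ for $p_j$ within distance $\ell$ of $p_1$. Combining the conditional independence through each anchor with a Chernoff-type bound on the number of anchors in a favorable region (those with $q(p_j)$ above a threshold) should yield a failure probability of $2^{-\Omega(n)}$, from which the lemma follows after the union bound over the four axial directions.

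The main obstacle is precisely this amplification step. A single anchor only contributes a factor $(1-q)^{n-2} \approx \exp(-n\ell) = \exp(-(\log n)^x/n)$, which is far from the required $2^{-cn}$. Furthermore, pairs $(j,k)$ are shared across the ``anchor $p_j$'' and ``anchor $p_k$'' analyses, so independent anchor contributions cannot simply be multiplied. Overcoming this requires a careful sequential conditioning that isolates $\Omega(n)$ effectively independent near-hit events and exploits the $(\log n)^x$ slack in $\ell$ (relative to the typical $1/n^2$ scale at which lines in a random arrangement pass near a fixed point); this is the technical heart of the proof, reminiscent in spirit of the negative-association argument alluded to in Section~\ref{s:Lbb}.
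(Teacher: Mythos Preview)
Your plan correctly identifies the right scale ($\ell \sim (\log n)^x/n^2$) and the central obstacle, but it stops exactly where the proof must begin: you name the amplification step as ``the technical heart'' and gesture at ``careful sequential conditioning'' without supplying any mechanism. The anchor-by-anchor analysis you sketch cannot, on its own, reach an exponential-in-$n$ bound, for the reason you yourself note: one anchor contributes a factor $\exp(-\Theta((\log n)^x/n))$, and even $\Theta(n)$ \emph{independent} anchors would yield only $\exp(-\Theta((\log n)^x))$. Worse, your pair-indexed indicators $Y_{jk}$ are \emph{positively} correlated through the shared anchor (conditioning on $Y_{jk}=1$ biases $p_j$ toward positions with large $q(p_j)$, which raises $\Prob{Y_{jk'}=1}$), so the negative-association machinery of Section~\ref{s:Lbb} does not apply to them. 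As written, the proposal restates the difficulty rather than resolving it.

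The paper takes a genuinely different route that sidesteps this dependency issue. Rather than asking directly whether some line $p_ap_b$ hits the segment, it discretizes the plane around $p_1$ into $s = \Theta(n^2/(\log n)^x)$ angular sectors, intersected with a blue disk of radius $0.2$ and a red annulus of radii $0.3$--$0.4$. The geometric input (Lemma~\ref{l:cross} and Corollary~\ref{c:anniv}) is that whenever a blue cell $B_i$ and the adjacent red cell $R_{i+1}$ are both occupied, the line through the two witnesses passes within $O(1/s)$ of $p_1$ on a prescribed side. This converts the problem to a bichromatic balls-in-bins question: the counts $X = |\{i: B_i \text{ hit}\}|$ and $Y = |\{i: R_i \text{ hit}\}|$ have expectation $\Theta(n)$, and the \emph{point-indexed} occupancy indicators $X_i$, $Y_i$ are negatively associated (each point lands in at most one cell), so Chernoff gives $\Prob{X,Y \ge \Theta(n)} \ge 1 - 2^{-\Theta(n)}$; this is where the exponential term originates. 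Conditional on $X,Y = \Theta(n)$, finding an adjacent occupied blue--red pair is a birthday-type calculation. The discretization is precisely the device you are missing: it replaces the intractable pair-indexed $Y_{jk}$ by point-indexed occupancy variables to which negative association and Chernoff apply cleanly.
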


\noindent
Proving Lemma~\ref{l:tailL} will take the rest of the section, but let us start by using it. 

\begin{proof}[Proof of Theorem~\ref{th:unif}~(i)]
  All $n$ variables $L(i)$ have the same expectation and by
  Lemma~\ref{l:L}, any algorithm that determines the chirotope of $P$
  must read at least a total of $2 (\sum_i L(i)-1)$. It thus suffice
  to determine $\Ex{L(1)}$, which rewrites as $\Ex{L(1)} = \sum_{k\geq
  0} \Prob{L(1)>k}$.

  Note that $\Prob{L(1)>k}$ decreases with $k$. Lemma~\ref{l:tailL}
  for $x=\frac32$ implies that the first $2 \log n - \frac32\log \log
  n$ terms are at least $1-2^{-cn}$ for some constant $c>0$. Keeping
  only these terms, we get
  \[ \Ex{L(1)} \geq (1-2^{-cn}) \pth{2 \log n - \frac32\log \log
  n} \geq (1-2^{-cn}) 2\log n - \frac32\log \log n.\]
  For $n$ large enough, $2^{-cn+1} \log n < \frac12\log \log n$ and
  the statement follows.
\end{proof}

\subsection{Discretization into a bichromatic birthday problem}

Our approach is to look for lines passing close to $p_1$, as such
lines are likely to force $L(1)$ to be large. To do so, we divide the
plane into some number of angular sectors around $p_1$ and define a
blue disk of center $p_1$ and radius $0.2$ and a red annulus with
center $p_1$ and radii $0.3$ and $0.4$. This discretizes the problem,
as if we find two points of $P$ in the blue and red parts of the same
or nearby sectors, then they must span a line passing close to $p_1$.

\bigskip
\noindent\begin{minipage}{0.6\textwidth}
 \quad One technical issue is that if $p_1$ is close enough to the
boundary of $[0,1]^2$, then parts of the red and blue regions will be
outside of $[0,1]^2$ and cannot contain any point of $P$. We handle
this by considering the 4 diagonal directions $(\pm 1,\pm 1)$, and
picking the one in which the boundary is the furthest away from
$p_1$. Now, in the cone of half-angle $\pi/8$ around that direction,
the red and blue parts are contained in the unit square. Letting $8s$
denote the total number of sectors, we therefore focus on the $s$
sectors around that direction. For the rest of this section, we assume
that this direction is $(1,1)$ as illustrated by the figure; the three
other cases are symmetric. We label $B_1, B_2, \ldots B_{s}$
(resp. $R_1, R_2, \ldots R_{s}$) the intersection of each of our
angular sectors with the blue disk minus $p_1$ (resp. the red
annulus), in counterclockwise order.
\end{minipage}\hfill\begin{minipage}{0.37\textwidth}
 \includegraphics[page=4,width=0.8\textwidth]{Figures}
\end{minipage}\hfill

\smallskip\noindent

\bigskip

Next, finding a line close to $p_1$ is not enough: to ensure that
$L(1) > k$, we need to find lines that intersect \emph{all four}
horizontal and vertical segments of length $2^{-k}$ with endpoint
$p_1$. To do that, we look for lines $(br)$ where $b \in B_i$ and $r
\in R_{i+1}$. This shift in indices ensures that the line $(br)$ is
close to $p_1$ \emph{and} passes below $p_1$: indeed, $r$ and $b$ are
respectively above and below the ray from $p_1$ that is a common
boundary of $B_i$ and $R_{i+1}$. Similarly, finding some points $b'\in
B_{i'}$ and $r' \in R_{i'-1}$ will provide a line $(b'r')$ passing
close to $p_1$ and above it; together, these two lines will intersect
all four horizontal and vertical segments that have $p_1$ as an
endpoint. It remains to relate the size of these segments to $s$.

\bigskip

Since we consider what happens around the direction $(1,1)$, the line
passing below $p_1$ will have to intersect both the horizontal segment
with $p_1$ as leftmost point and the vertical segment with $p_1$ as
topmost point. Note, however, that any line $(br)$ that we consider
has slope at least $\tan \frac{\pi}8 = \sqrt{2}-1$. Let
$\rightarrow_\ell$ and $\downarrow_\ell$ denote the segments of length
$\ell$ with $p_1$ as, respectively, leftmost and topmost point. If a
line $(br)$ intersects $\downarrow_{\ell}$, it must also intersect
$\rightarrow_{\frac\ell{\sqrt{2}-1}}$. We thus focus on finding the
smallest $\ell$ such that $\downarrow_\ell$ is guaranteed to meet
$(br)$.

\newcounter{postponedtheorem:cross}\setcounter{postponedtheorem:cross}{\value{theorem}}
\begin{lemma}\label{l:cross}
  If $s\ge 10$, for any point $b \in B_i$ and $r \in R_{i+1}$, the line
  $(br)$ intersects $\downarrow_{\frac{\pi}{2s}}$.
\end{lemma}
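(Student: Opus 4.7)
My plan is to compute the $y$-intercept of the line $(br)$ in closed form and verify that it lies in the interval $[-\pi/(2s),\, 0]$, which is exactly the condition that the line meets $\downarrow_{\pi/(2s)}$. Place $p_1$ at the origin and write $b = (\rho_b\cos\theta_b,\, \rho_b\sin\theta_b)$ and $r = (\rho_r\cos\theta_r,\, \rho_r\sin\theta_r)$ in polar form, with $\rho_b \in (0, 0.2]$, $\rho_r \in [0.3, 0.4]$, and $\theta_b, \theta_r \in [\pi/8,\, 3\pi/8]$ (the cone around $(1,1)$). Since $B_i$ and $R_{i+1}$ are consecutive angular sectors of width $\pi/(4s)$ and $R_{i+1}$ lies counterclockwise of $B_i$, the gap $\delta := \theta_r - \theta_b$ satisfies $0 \le \delta \le \pi/(2s)$. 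A direct computation then yields
\[
y_0 \;=\; \frac{b_y r_x - b_x r_y}{r_x - b_x} \;=\; -\,\frac{\rho_b\,\rho_r\,\sin\delta}{\rho_r\cos\theta_r - \rho_b\cos\theta_b}.
\]

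The argument then splits into two steps. First, I would show that the denominator $D := \rho_r\cos\theta_r - \rho_b\cos\theta_b$ is strictly positive, so that $y_0 \le 0$ and the line passes below $p_1$ at the $y$-axis. Expanding $\cos\theta_r = \cos\theta_b\cos\delta - \sin\theta_b\sin\delta$, and using $\rho_r \ge 0.3 > 0.2 \ge \rho_b$ together with $\cos\delta \ge 1 - \pi^2/(32 s^2)$ and $\sin\delta \le \pi/(2s)$, the hypothesis $s \ge 10$ keeps the angular perturbation small enough that $D > 0$. Second, I would establish $|y_0| \le \pi/(2s)$: combining $\sin\delta \le \delta \le \pi/(2s)$ in the numerator with the lower bound on $D$ obtained in the first step, the target inequality $\rho_b\rho_r\sin\delta \le (\pi/(2s))\,D$ reduces, after clearing denominators and expanding the cosines, to a one-parameter trigonometric inequality in the boundary angle $\phi$ of the two sectors.

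The main obstacle is the second step: $D$ shrinks whenever $\phi$ approaches the counterclockwise extreme $3\pi/8$ of the cone, where $\cos\theta_r$ can be as small as $\cos(3\pi/8)=\tfrac12\sqrt{2-\sqrt 2}\approx 0.38$. The choice $s \ge 10$ is exactly what makes the angular deviations $|\theta_b-\phi|$ and $|\theta_r-\phi|$ small enough that $D$ remains comparable to $(\rho_r-\rho_b)\cos\phi$ and dominates the numerator by the required factor $2s/\pi$. I would close the argument by reducing to the extreme configuration $\rho_b = 0.2$, $\rho_r = 0.3$, $\phi = 3\pi/8 - \pi/(4s)$, $\delta = \pi/(2s)$ (which minimizes $D$ and maximizes the numerator), and checking the resulting numerical inequality directly.
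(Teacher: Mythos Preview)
Your strategy—express the $y$-intercept $y_0$ in closed form via polar coordinates, reduce to a corner configuration by monotonicity, and finish with a numerical check—is the same route the paper takes; the paper merely computes the same quantity $h=|y_0|$ through the law of sines in the triangle $v p_1 r$ instead of through the Cartesian formula.

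There is, however, a genuine gap: your final numerical check does not go through. At your own extremal configuration ($\rho_b=0.2$, $\rho_r=0.3$, $\theta_r=3\pi/8$, $\theta_b=3\pi/8-\pi/(2s)$, $s=10$) one finds
\[
D=0.3\cos\tfrac{3\pi}{8}-0.2\cos\!\Bigl(\tfrac{3\pi}{8}-\tfrac{\pi}{20}\Bigr)\approx 0.0103,
\qquad
|y_0|=\frac{0.06\,\sin(\pi/20)}{D}\approx 0.91,
\]
whereas $\pi/(2s)\approx 0.157$. The failure is structural, not arithmetic: near the counter-clockwise edge $3\pi/8$ of the cone the line $(br)$ is close to vertical, so although its perpendicular distance to $p_1$ is $O(1/s)$, its crossing of the \emph{vertical} axis through $p_1$ lies far below. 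Indeed, as $s\to\infty$ one has $|y_0|\sim\dfrac{\rho_b\rho_r}{(\rho_r-\rho_b)\cos\theta_0}\,\delta=\dfrac{0.6}{\cos\theta_0}\,\delta$, which exceeds $\delta\le\pi/(2s)$ whenever $\cos\theta_0<0.6$, i.e.\ throughout roughly the upper third of the cone; increasing $s$ does not help.

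Note that the paper's worst-case computation actually sits at the \emph{clockwise} end of the cone: from the law-of-cosines expression $|vr|^2=0.09+h^2+0.6h\sin\tfrac{\pi}{8}$ one reads $\widehat{vp_1r}=5\pi/8$, hence $\theta_r=\pi/8$, the opposite corner from yours. Your monotonicity and corner-reduction arguments are sound, but before ``checking the resulting numerical inequality directly'' you must reconcile which axis-segment ($\downarrow$ versus $\rightarrow$) is really being controlled and at which end of the cone the extremum genuinely lies; as written, the check at your corner simply fails.
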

\begin{proof}
The proof involves elementary geometry and is deferred to Appendix~\ref{a:L}.
\end{proof}

Altogether, we can bound $L(1)$ from below by a simple balls-in-bins condition:

\begin{corollary}\label{c:anniv}
  Assume that $k \ge 3$ and that $s=2^{k+1}$. If there exists $i,i'$ in
  $[s]$ such that $P$ intersects each of the four regions $B_i$, $R_{i+1}$,
  $B_{i'}$, and $R_{i'-1}$, then $L(1) \ge k$.
\end{corollary}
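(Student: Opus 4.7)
The plan is to show that the hypothesis forces all four horizontal and vertical segments of length $2^{-(k-1)}$ based at $p_1$ to meet some line $p_ap_b$ with $a,b\in[n]\setminus\{1\}$; by the definition of $L$ (recall $L(1)\ge k$ iff the condition at level $k-1$ fails), this is exactly the statement $L(1)\ge k$.

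First I would check the numerology. Since $k \ge 3$, we have $s = 2^{k+1} \ge 16 \ge 10$, so Lemma~\ref{l:cross} applies. Moreover,
\[
\frac{\pi}{2s} = \pi \cdot 2^{-(k+2)} \le 2^{-(k-1)} \quad\text{and}\quad \frac{\pi(\sqrt{2}+1)}{2s} = \pi(\sqrt{2}+1)\cdot 2^{-(k+2)} \le 2^{-(k-1)},
\]
the second reducing to $\pi(\sqrt{2}+1) \approx 7.58 \le 8$. This is exactly why the choice $s = 2^{k+1}$ matches a segment length of $2^{-(k-1)}$.

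Next I would handle the pair $(B_i, R_{i+1})$: pick any $b \in P \cap B_i$ and $r \in P \cap R_{i+1}$ guaranteed by the hypothesis. By Lemma~\ref{l:cross}, the line $(br)$ meets $\downarrow_{\pi/(2s)}$, hence $\downarrow_{2^{-(k-1)}}$. The line $(br)$ has slope at least $\tan(\pi/8) = \sqrt{2}-1$ (as noted in the paragraph preceding Lemma~\ref{l:cross}), so from the observation that a line meeting $\downarrow_\ell$ with such a slope must also meet $\rightarrow_{\ell/(\sqrt{2}-1)}$, we conclude that $(br)$ also meets $\rightarrow_{\pi(\sqrt{2}+1)/(2s)} \subseteq \rightarrow_{2^{-(k-1)}}$.

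For $\uparrow$ and $\leftarrow$, I would invoke symmetry. Reflecting across the line through $p_1$ in direction $(1,1)$ is an isometry that swaps horizontal and vertical segments (so $\downarrow \leftrightarrow \leftarrow$ and $\rightarrow \leftrightarrow \uparrow$), and reverses the counterclockwise ordering of the angular sectors, sending any pair $(B_j, R_{j+1})$ to one of the form $(B_{j^*}, R_{j^*-1})$. Applying the same argument through this symmetry to $b' \in P \cap B_{i'}$ and $r' \in P \cap R_{i'-1}$ yields a line $(b'r')$ meeting both $\uparrow_{2^{-(k-1)}}$ and $\leftarrow_{2^{-(k-1)}}$. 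Combining the two pairs covers all four segments, giving $L(1) \ge k$. No step looks hard; the only sensitive point is the numerical inequality $\pi(\sqrt{2}+1)<8$, which is precisely what makes the chosen discretization $s=2^{k+1}$ align with the required segment length $2^{-(k-1)}$.
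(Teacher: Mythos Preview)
Your proof is correct and follows essentially the same route as the paper's: apply Lemma~\ref{l:cross} to $(b,r)\in B_i\times R_{i+1}$ to hit $\downarrow_{\pi/(2s)}$, use the slope bound $\tan(\pi/8)=\sqrt2-1$ to also hit $\rightarrow_{\pi(\sqrt2+1)/(2s)}$, invoke the reflection across the diagonal through $p_1$ to transfer the argument to $(b',r')\in B_{i'}\times R_{i'-1}$ for the remaining two segments, and conclude via the numerical check $\pi(\sqrt2+1)\le 8$ (equivalently $\tfrac{\pi}{2(\sqrt2-1)}\le 4$), which gives the segment length $4/s=2^{-(k-1)}$. The only cosmetic difference is that you carry the numerology at the start whereas the paper does it at the end.
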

\begin{proof}
  Let $b\in B_i \cap P$ and $r\in R_{i+1} \cap P$. Since $s \ge
  16$, Lemma~\ref{l:cross} ensures that the line $(br)$
  intersects $\downarrow_{\frac{\pi}{2s}}$. As argued before
  Lemma~\ref{l:cross}, that line also intersects $\downarrow_{\ell}$
  and $\rightarrow_{\ell}$ with $\ell =
  \frac{\pi}{2(\sqrt{2}-1)s}$. A symmetry with respect to
  the line of slope 1 through $p_1$ gives the intersection with the two
  other segments from the points in $B_{i'}$ and $R_{i'-1}$. Since
  $\frac{\pi}{2(\sqrt{2}-1)} \le 4$, the existence of $i$
  and $i'$ ensures that all four horizontal and vertical segments of
  length $\frac{4}s = 2^{-k+1}$ starting in $p_i$ are intersected by some lines
  spanned by $P\setminus\{p_1\}$, so $L(1) > k-1$.
\end{proof}

\subsection{A balls-in-bins analysis}
\label{s:Lbb}

To prove Lemma~\ref{l:tailL}, we are interested in the probability
that $L(1)$ be at least $2n \log n$ (minus some change), so we use
Corollary~\ref{c:anniv} with $s = \frac{n^2}{\log^x n}$ and $x >1$. To
study the probability that the indices $i$ and $i'$ exist, we define,
for $i \in [s]$ and $j \in [n-1]$, the random variables
\begin{center}
  \begin{tabular}{l|l|l}
    $X_{i,j} = \mathbb{1}_{p_{j+1}\in B_i}$ & $X_i=\max_{j \in [n-1]} X_{i,j}$ & $X = \sum_{i \in [s]}X_i$\\
    $Y_{i,j} = \mathbb{1}_{p_{j+1}\in R_i}$ & $Y_i=\max_{j \in [n-1]} Y_{i,j}$ & $Y = \sum_{i \in [s]}Y_i$ \\
  \end{tabular}
\end{center}
(Note that, for a better bookkeeping, we index the events associated
with $p_j$ by $j-1$ because $p_1$ is already chosen.) In plain
English, $X_i$ is the indicator variable that $B_i$ is non-empty, and
$X$ counts the number of non-empty regions $B_i$. (The $Y_i$ and $Y$
variables do the same for the regions $R_i$.)  The definition of the
regions ensures that each is fully contained in the unit square, that
all $B_i$ have the same area, and that all $R_i$ have the same
area. So all the $\{X_{i,j}\}_{i,j}$ are identically distributed, and
so are the $\{Y_{i,j}\}_{i,j}$, the $\{X_i\}_i$, and the $\{Y_i\}_i$.

\paragraph{Approach.}

Conditioning on $X=\beta$ and $Y=\rho$, there are $\beta$ or $\beta-1$
red cells whose index follows the index of an occupied blue cell
(depending on $B_s$).  Since the $\rho$ occupied red cells are chosen
uniformly amongst the $s$ red cells, we get:
\begin{equation}\label{eq:simple}
  \Prob{\exists i \colon B_i \cap P \neq \emptyset \text{ and } R_{i+1} \cap P \neq \emptyset \mid X=\beta, Y = \rho} \ge 1-\pth{1-\frac{\beta-1}{s}}^\rho.
\end{equation}
Indeed, all but at most one of the occupied blue cells are next to a
red cell which, if occupied, makes the event true. Our approach is to
combine this inequality with a concentration bound for $X$ and $Y$ to
bound from below the probability that $i$ exists. A symmetric argument
takes care of the existence of $i'$.

\paragraph{Concentration of sums of dependent variables.}

If the $X_i$ and the $Y_i$ were independent, the Chernoff-Hoeffding
would bound from below the values of $X$ and $Y$ with high
probability. For fixed $j$, however, any subset of
$\{X_{i,j}\}_i \cup \{Y_{i,j}\}_i$ sums to zero or one; These variables
are thus ``negatively'' dependent in the sense that when one is $1$,
the others must be $0$. Formally, they can be shown to
be \emph{negatively associated}. We do not elaborate on this notion
here, but refer to the paper of Dubhashi and Ranjan~\cite{DubRan98}
from which we highlight the following points:

\begin{itemize}
\item Any finite set of $0-1$ random variables that sum to $1$ is
  negatively associated~\cite[Lemma~8]{DubRan98}. So, the set
  $\{X_{i,j}\}_i \cup \{1-\sum_i X_{i,j}\}$ is negatively associated.

  \medskip

\item Any set of increasing functions of pairwise disjoint subsets of
  negatively associated random variables forms, again, a set of
  negatively associated random
  variables~\cite[Proposition~7]{DubRan98}. Thus, each of the sets
  $\{X_{i,j}\}_i$, $\{Y_{i,j}\}_i$, $\{X_i\}_{i \in [s]}$, and
  $\{Y_i\}_{i \in [s]}$ consists of negatively associated random
  variables.
  
  \medskip

\item The Chernoff-Hoeffding bounds apply to sums of any set of
  negatively associated random
  variables~\cite[Proposition~5]{DubRan98}.
\end{itemize}

\noindent
Hence, applying~\cite[Theorem~4.2]{MotRag95} for $\delta=\frac{1}{2}$ for
  instance yields the desired concentration:
  \[ \Prob{X \le \frac{\Ex{X}}2} \le 0.89^{\Ex{X}} \quad \text{and similarly } \quad  \Prob{Y \le \frac{\Ex{Y}}2} \le 0.89^{\Ex{Y}}.\]

\paragraph{Computations.}

(Due to space limitations, some computations are abridged here and
presented in full details in Appendix~\ref{a:L}). Each $B_i$ has area
$c_1/s$, and each $R_i$ has area $c_2/s$ with $c_1 = \frac{\pi}{200}$
and $c_2=\frac{7\pi}{800}$. Thus, $X_{i,j}$ and $Y_{i,j}$ are $0-1$
random variables, taking value $1$ with probability, respectively,
$c_1/s$ and $c_2/s$. For fixed $i$, the $\{X_{i,j}\}_{j \in [n-1]}$
are independent, so we have $\Ex{X_i} = c_1\frac{\log^x n}{n} -
O\pth{\frac{\log^{2x} n}{n^2}}$ and $\Ex{Y_i} \ge c_2\frac{\log^x
n}{n} - O\pth{\frac{\log^{2x} n}{n^2}}$. Since the $X_i$ are
identically distributed, and so are the $Y_i$, we have
\[ \Ex{X} = s \Ex{X_i} \ge c_1n - O\pth{\log^{x} n} \quad \text{and} \quad \Ex{Y} = s \Ex{Y_i} \ge c_2n - O\pth{\log^{x} n}.\]
Now, let $\OO$ denote the event that there exist $a,b$ in $[s]$ such
that each of $B_a$, $R_{a+1}$, $B_b$, $R_{b-1}$ is hit by $P$. Let us
condition by the event $\G = \{X \geq \Ex{X}/2 \hbox{\ and\ } Y \geq
\Ex{Y}/2\}$. A union bound yields
\[ \Prob{\G} \ge 1- \pth{0.89^{\Ex{X}} + 0.89^{\Ex{Y}}} \ge 1 - \pth{0.89^{c_1n - O\pth{\log^{x} n}} + 0.89^{c_2n - O\pth{\log^{x} n}}} \]
which is exponentially close to $1$. We thus bound from below
$\Prob{\OO} \ge \Prob{\G}\Prob{\OO|\G}$ and concentrate on the
conditional probability.

\paragraph{Bichromatic birthday paradox.}

(Due to space limitations, some computations are abridged here and
presented in full details in Appendix~\ref{a:L}). The probability
$\Prob{\OO | \G}$ can be expressed as a convex combination of the
conditional probabilities $f(\beta,\rho)
= \Prob{ \OO|\G_{\beta,\rho}}$, where for integers
$\beta\geq \Ex{X}/2$ and $\rho\geq
\Ex{Y}/2$ we take $\G_{\beta,\rho} = \{X=\beta, Y=\rho\}$. Conditioned on
$\G_{\beta,\rho}$, the occupied regions of each type are uniformly
random and independent, which simplifies the analysis. Furthermore,
the function $f(\beta,\rho)$ is increasing in both variables (the more
occupied regions there are, the more likely it is that the collisions
we desire occur). Thus, we concentrate on finding a lower bound on
$f(\beta,\rho)$ for $\beta=\ceil{ \frac{\Ex{X}}2}$ and
$\rho=\ceil{ \frac{\Ex{Y}}2 }$.

\bigskip

Assume the $\beta$ occupied blue regions have been chosen. Let $T_+$
(resp. $T_-$) denote the set of red regions in sectors following
counterclockwise (resp. clockwise) the sectors whose blue regions have been
chosen. Since the blue regions in the boundary angular sectors may be
among those chosen, we have $\beta-1 \le |T_+|, |T_-| \le \beta$. We now pick
the $\rho$ red regions to be occupied.
Let $E_+$ (resp. $E_-$) denote the event that a region of $T_+$
(resp. $T_-$) has been chosen among the $\rho$ red regions. Pretend, for
the sake of the analysis, that we choose the red regions one by
one. If none of the first $i$ regions chosen is in $T_+$, then next
one has to be picked from the $s-i$ unpicked regions, at least $\beta-1$
of which are in $T_+$. Thus,
\[ 1-\Prob{E_+} \le \prod_{i=0}^{\rho-1} \pth{1-\frac{\beta-1}{s-i}} =   \frac{(s-\rho)!(s-\beta+1)!}{s!(s-\beta-\rho+1)!}\]
Using a symmetric argument for $T_-$ and applying a union bound, we
get $1-f(\beta,\rho) \le
2 \ \frac{(s-\rho)!(s-\beta+1)!}{s!(s-\beta-\rho+1)!}$. Note that for
$\beta=\ceil{ \frac{\Ex{X}}2}$ and $\rho=\ceil{\frac{\Ex{Y}}2}$, both
$\beta$ and $\rho$ are $\Theta(n) = o(s)$. Taking logarithm and using
Simpson's approximation formula, which asserts that $\log(N!) =
N \log(N) - N + O(\log N)$, we get
\[\begin{aligned}
\log(1- f(\beta,\rho)) & = s \log \pth{1+\frac{\rho(\beta-1)}{s(s-\beta-\rho+1)}} - \rho\log \pth{1+\frac{\beta-1}{s-\beta-\rho+1}}\\
&  \hspace{4.8cm} -\beta \log \pth{1+\frac{\rho}{s-\beta-\rho+1}} + O\pth{\log s}.\\
\end{aligned}\]

\noindent
Now in the regime we are looking at, we have $\beta=c_1n/2 - O\pth{\log^{x} n}$,
$\rho=c_2n/2 - O\pth{\log^{x} n}$, and $s=\frac{n^2}{\log^xn}$. 
Taking first order Taylor
expansions, our bound rewrites as

\[ \log(1- f(\beta,\rho)) = -\frac{\rho\beta}{s-\beta-\rho+1} + O(\log s) = -\frac{c_1c_2}{4}\log^x n + O\pth{\log n}.\]

\noindent
provided we have $x>1$. Hence, $f(\beta,\rho) = 1 -
\exp(\Theta(\log(n)^x))$. Altogether, we get that $\Prob{\OO|\G}$ is
exponentially close to $1$. Since $\Prob{\G}$ is also exponentially
close to $1$, we finally get that our event $\OO$ holds with
probability exponentially close to~$1$. With Corollary~\ref{c:anniv},
this proves Lemma~\ref{l:tailL}.

\section{Extension to more general measures}
\label{s:conclusion}

We stated and proved our main result (Theorems~\ref{th:unif}) for a
uniform sample of the unit square. The careful reader may observe,
however, that we have taken care to separate the geometric from the
probabilistic arguments. Although the multiplicative constants of the
leading terms in the end-results matter (we want both $\Ex{U(i)}$ and
$\Ex{L(i)}$ to equal $2 \log n$ at first order), the multiplicative
constants in the geometric arguments do \emph{not} matter:

\begin{itemize}
\item In the analysis of $U$, if Lemma~\ref{l:butt} (in
    appendix) is degraded from $\frac6m + \frac{4}{m \delta(p,q)}$ to
    $O(\frac1{m\delta(p,q)})$, the statement in Lemma~\ref{l:ExU}
    remains that $\Ex{U(1)} \le 2\log n + O(1)$.

  \medskip
  
\item If the blue disk and red annulus are scaled by a constant
  factor, Lemma~\ref{l:cross} still holds with
  $\downarrow_{\frac{\pi}{2s}}$ replaced by
  $\downarrow_{\Theta\pth{\frac{1}{s}}}$; this changes the choice of
  $s$ in Section~\ref{s:Lbb} to $s = \Theta\pth{\frac{n^2}{\log^x
      n}}$, which changes only at \emph{which} exponential speed the
  probability that $L(1) \ge 2\log n - O(\log \log n)$ converges to
  $1$.

  \medskip
  
\item More generally, the lower bound on $L(1)$ should work for any
  probability measure for which one can prove a uniform lower bound of
  $\Omega(1/s)$ for the probabilities of the individual blue and red
  regions.
\end{itemize}

\noindent It should therefore be clear that the same analysis, with
different constants, holds for a variety of more general probability
distributions for the points; examples include the uniform
distribution on any bounded convex domain with non-empty interior, or
even any distribution on such a convex set with a density that is
bounded away from $0$.

\clearpage

\bibliography{./biblio}

\clearpage
\appendix

\section{Experimental setup}\label{a:exp}

We explain here in more detail how we conducted our experiments.

\paragraph{Signature.}

We keep track of the order types already seen by storing them
explicitly in the form of a signature word. Let $P$ be a set of $n$
points. For a labelling $\sigma$ of $P$ by $1, 2, \ldots, n$, let
\[ w(\sigma) = 2a_{1,1} a_{1,2} \ldots a_{1,n-2}1a_{2,1}a_{2,2}
\ldots a_{2,n-2}1a_{3,1}a_{3,2} \ldots a_{3,n-2}\ldots 1a_{n,1}a_{n,2}
\ldots a_{n,n-2} \]
where $1a_{i,1}, a_{i,2}, \ldots$ are the labels of the points in
circular counterclockwise (CCW) order around the $i$th point, starting
from the first point (or from the second point when turning around the
$1$st point). We define as \emph{signature} of $P$ the word
$w(\sigma)$ that is lexicographically smallest among the labelings
$\sigma$ where the points labelled $1$ and $2$ are consecutive on the
convex hull (in CCW order). The fact that this signature characterizes
the order type of $P$ follows from Bokowski's study of hyperline
sequences~\cite[$\mathsection 1.6$]{bokowski2006computational}.

\begin{lemma}\label{signature}
Given a set of permutations of $n$ elements $\{\tau_i\}_{1\leq i\leq
n}$ with $\tau_1(1)=2$ and $\tau_i(1)=1$ for $i>2$ obtained as the
signature of a point set $P$, the orientation of a triple
$(p_a,p_b,p_c)$ with $a<b<c$ depends only on the comparison of
$\tau_a(b)$ and $\tau_a(c)$.
\end{lemma}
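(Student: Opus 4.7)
The plan is to interpret the signature geometrically and then read off the orientation of $(p_a,p_b,p_c)$ from the angular picture around $p_a$. Lexicographic minimization of $w(\sigma)$ forces $p_1, p_2$ to be CCW-consecutive on the convex hull of $P$ and forces the labels $3,4,\ldots,n$ to be assigned to the remaining points in the order in which the CCW angular sweep around $p_1$, starting from the direction $\overrightarrow{p_1 p_2}$, meets them. Letting $\theta_i$ denote the CCW angle of $\overrightarrow{p_1 p_i}$ measured from $\overrightarrow{p_1 p_2}$, this yields $0=\theta_2<\theta_3<\cdots<\theta_n<\pi$; the upper bound is strict because $p_1,p_2$ are CCW-consecutive on the hull (so every other point lies strictly on the left of $\overrightarrow{p_1 p_2}$) and no three points are collinear.

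For $a=1$ and $1<b<c$, the orientation of $(p_1,p_b,p_c)$ is controlled by $\theta_c-\theta_b\in(0,\pi)$, and is therefore always CCW; one also has $\tau_1(b)<\tau_1(c)$ by construction, so the orientation is (trivially) a function of the comparison.

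For $a\ge 2$, the key geometric observation is that any $p_b$ with $b>a$ lies strictly on the left of the directed line $\overrightarrow{p_1 p_a}$, because the CCW angle from $\overrightarrow{p_1 p_a}$ to $\overrightarrow{p_1 p_b}$ equals $\theta_b-\theta_a\in(0,\pi)$. Reversing the direction of this line swaps its two sides, so from $p_a$'s viewpoint every such $p_b$ lies strictly to the right of $\overrightarrow{p_a p_1}$. Measuring CCW angles about $p_a$ with $\overrightarrow{p_a p_1}$ as the zero direction, this places all the relevant $p_b$'s in the open window $(\pi,2\pi)$, an arc of length exactly $\pi$. Within a half-turn window the CCW cyclic order linearizes: for any two such $p_b,p_c$, the CCW arc from $\overrightarrow{p_a p_b}$ to $\overrightarrow{p_a p_c}$ has length less than $\pi$ iff $p_b$ precedes $p_c$ in the CCW cyclic order around $p_a$ starting from $p_1$, i.e.\ iff $\tau_a(b)<\tau_a(c)$; and this arc has length less than $\pi$ exactly when $(p_a,p_b,p_c)$ is CCW-oriented.

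The main obstacle is the passage between two angular pictures: labels are assigned with $p_1$ at the centre, whereas the orientation of $(p_a,p_b,p_c)$ is a question centred at $p_a$. The bridging observation --- that ``$p_b$ is further CCW around $p_1$ than $p_a$'' coincides with ``$p_b$ lies in the half-plane on the right of $\overrightarrow{p_a p_1}$'' --- is what squeezes all candidate $p_b,p_c$ into a half-turn around $p_a$, after which reading the orientation off $\tau_a$ is routine. The hypothesis $a<b<c$ is used precisely to guarantee that both $p_b$ and $p_c$ live in this half-turn window.
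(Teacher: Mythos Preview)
Your proof is correct and follows essentially the same idea as the paper's: both establish that for $b,c>a$ the points $p_b,p_c$ lie in a common open half-plane bounded by the line $p_1p_a$, so that around $p_a$ they fit in a half-turn window and the CCW cyclic order reads off the orientation directly. The paper achieves this by an affine normalization ($p_1$ at the origin, $p_2=(0,-1)$, $p_a$ on the positive $x$-axis) and then argues from coordinate signs; you achieve the same conclusion with an angular argument around $p_1$ and then around $p_a$. Your version is slightly more explicit in justifying why the labels $3,\ldots,n$ are in CCW order around $p_1$ and in handling the case $a=1$ (and, implicitly, $a=2$), which the paper's coordinate normalization does not literally cover.
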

\begin{proof}
The ambiguity comes from the fact that when we sort points around
$p_a$ the angle $\widehat{p_bp_ap_c}$ may be greater or smaller than $\pi$.
  Up to an affine transformation, we may assume that for a realization
  of the order type: $p_1$ is the origin,  $p_2=(0,-1)$, and $p_a=(x_a,0)$.
  Then, since $p_1p_2$ is the edge of the convex hull after $p_1$ in
  counter-clockwise direction we get that
  $\forall 2<j<a, x_j>0, y_j<0$ and 
  $\forall j>a, x_j>0, y_j>0$.
  In particular the angle $\widehat{p_bp_ap_c}\leq\pi$
  and we deduce $\tau_a(b)<\tau_a(c) \Longleftrightarrow  {\rm orient}(p_a,p_b,p_c)={\tt ccw}$.
\end{proof}

Actually the above lemma allows to reduce the signature from an
element of $[1,n]^{(n-1)\times n}$ to an element of $[3,n]^{n-2}\times
[3,n]^{n-2}\times [4,n]^{n-3}\times\ldots\times
[i,n]^{n-i+1}\times\times\ldots\times [n-1,n]^2$ reducing
approximatively the signature size by a factor of 2. The signature, as
well as its reduced form, can be computed in time $O(n^3)$ in a
straightforward way. The geometric computations are done using CGAL's
\texttt{Exact\_predicates\_inexact\_constructions\_kernel} \cite{cgal:eb-19a}.)

\paragraph{Pseudorandomess and precision.}

We generated our point sets by picking the coordinates of each point
in $[1.0, 2.0]$. We used the pseudo-random generators of the standard
C++ library, specifically we produce each point's coordinate by a call
to \texttt{dis(gen)} with:

\begin{itemize}
\item[] \texttt{std::random\_device rd;}
\item[] \texttt{std::mt19937\_64 gen(rd());}
\item[] \texttt{std::uniform\_real\_distribution<double> dis(1.0, 2.0);} 
\end{itemize}

\noindent
As a consequence, the precision is the same everywhere in the domain
we sample, and every coordinate is given with $52$ bits of
precision. Note that every order type of size $11$ can be represented
exactly with $16$ bits of precision per
coordinate~\cite{aichholzer2002enumerating}.

\paragraph{Order types of size $10$ and $11$.}

Aichholzer et al.~\cite{aichholzer2002enumerating} enumerated the
order types up to size~$11$. Their count is, however, up to
reflection: they identify the order type of a point set with the order
type of the reflection of that point set with respect to a line. For
size up to~$10$, they also readily provide
realizations.~\footnote{\url{http://www.ist.tugraz.at/staff/aichholzer/research/rp/triangulations/ordertypes/}}
We examined every realization of size~$10$ in their database and
checked whether reflecting the points (horizontally) yields the same
order type; this happened for $13\, 064$ of the realizations. So, the
total number of order types of size $10$ is $28\, 606\, 030$. We
haven't yet done this for size~$11$, so we can only state that their
number is between $2\ 334\ 512\ 907$ and twice that number. If the
small number of symmetric order types of size~$10$ is indicative, we
should expect that the number of order types of size~$11$ to be about
$4.6$ billion.

\bigskip

\newcounter{postponedtheorem:SAVE}
\noindent\begin{minipage}{0.6\textwidth}
\section{Analysis of $U(1)$ for a uniform sample of the unit square\label{a:U}} 
Recall that $P$ is a uniform sample of the unit square of size $n$, 
and $G_m$ is the partition of $[0,1]^2$ into $m\times m$ square cells 
of side length $\frac1m$. 

\subsection{Butterflies}
Given two points $p, q \in [0,1]^2$, we first let $B$ be the union of 
all lines intersecting the cells of $G_m$ containing $p$ and $q$; we 
then define $B_m(p,q)$ as the intersection of 

\end{minipage}~\begin{minipage}{0.4\textwidth}
 \includegraphics[page=1,width=\textwidth]{Figures}
\end{minipage}\smallskip 

\noindent 
$[0,1]^2$ with the 
Minkowski sum of $B$ with a disk of radius {$\frac{\sqrt{2}}m$}.  We 
call $B_m(p,q)$ the \emph{butterfly} of $p$ and $q$ (at resolution 
$m$). Note that the butterfly $B_m(p,q)$ contains all the cells 
intersecting $B$. Hence, if there exists a line intersecting the cells 
of $p$, $q$ and $r$, then $r\in B_m(p,q)$. The following lemma bounds 
the area of $B_m(p,q)$ by $O(\frac1{m\delta(p,q)})$.

\begin{lemma}\label{l:butt}
  The area of $B_m(p,q)$ is at most $\frac6m + \frac{4}{m
    \delta(p,q)}$ where $\delta(p,q)$ is the distance between the
  centers of the cells of $p$ and $q$.
\end{lemma}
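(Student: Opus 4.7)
Apply a rigid motion to place the center of $C_p$ at the origin and the center of $C_q$ at $(\delta,0)$ with $\delta=\delta(p,q)$; this preserves all areas. If $m\delta\le 2$, then the stated bound $\frac{6}{m}+\frac{4}{m\delta}$ already exceeds $2\ge 1\ge |B_m(p,q)|$, so the inequality is trivial, and I may assume $m\delta>2$ henceforth. Under this assumption, examining the common inner tangents between the two axis-aligned $\frac1m\times\frac1m$ cells shows that any line through both has absolute slope at most $(1/m)/(\delta-1/m)\le 2/(m\delta)$.

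I would then decompose $B$, the union of all lines through $C_p$ and $C_q$, into a ``waist'' (the part with $x\in[-\frac1{2m},\delta+\frac1{2m}]$) and two semi-infinite ``cones'' extending outward from each cell along the $x$-axis. The waist is contained in a horizontal strip of vertical width $\frac1m$ whose horizontal extent inside $[0,1]^2$ is at most $\sqrt{2}$, contributing area at most $\sqrt{2}/m$. For the outward cone past $C_p$, the vertical extent at signed horizontal distance $r\ge 0$ from $C_p$ is at most $\frac1m+\frac{2r}{m\delta}$ (cell contribution plus spread from the bounded slope). Integrating $r$ from $0$ to $\sqrt{2}$, the maximum extent inside $[0,1]^2$, gives area at most $\sqrt{2}/m+2/(m\delta)$, and the symmetric bound holds past $C_q$. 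Summing yields
\[|B\cap[0,1]^2|\le \tfrac{3\sqrt{2}}{m}+\tfrac{4}{m\delta}.\]

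Finally, $B_m(p,q)$ is the intersection of $[0,1]^2$ with the Minkowski sum of $B$ with a disk of radius $\sqrt{2}/m$. Since the boundary of $B\cap[0,1]^2$ has total length $O(1)$ (a thin bowtie inside a unit square), this dilation adds at most $O(1/m)$ further area. The main obstacle is the numerical bookkeeping: bundling the various additive $O(1/m)$ terms so that they collapse to exactly $6/m$ rather than a larger constant times $1/m$. A cleaner route may be to bound the number of grid cells of $G_m$ that the bowtie hits directly, using the classical fact that a line segment of length $L$ crosses at most $1+2Lm$ cells of $G_m$; this tightens the ``cone'' and ``waist'' contributions simultaneously and avoids the crude perimeter-times-$\sqrt{2}/m$ estimate for the Minkowski buffer.
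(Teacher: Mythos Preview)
Your plan has two genuine gaps, and they are not just bookkeeping.

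\textbf{The rigid motion breaks axis-alignment.} After you rotate so that the cell centers sit on the $x$-axis, the cells $C_p$ and $C_q$ are tilted squares, with vertical extent anywhere between $1/m$ and $\sqrt{2}/m$ depending on the rotation angle. Your inner-tangent slope bound $(1/m)/(\delta-1/m)$ and your waist width $1/m$ both implicitly assume the cells stayed axis-aligned; in the worst case each picks up a factor of $\sqrt{2}$. That alone already pushes your cone contribution well past $4/(m\delta)$.

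\textbf{The Minkowski pass cannot be tightened to fit.} Bounding the buffer by ``perimeter of $B\cap[0,1]^2$ times $\sqrt{2}/m$'' gives on the order of $4\sqrt{2}\cdot\sqrt{2}/m=8/m$ of additional area (the boundary of the bowtie inside the square is four segments, each of length up to $\sqrt{2}$). Since your pre-Minkowski estimate is already $3\sqrt{2}/m+4/(m\delta)$, the total blows past $6/m+4/(m\delta)$ by a wide margin. This is structural, not numerical: bounding $B$ first and then dilating forces you to pay for the buffer along the full bowtie boundary, and no amount of sharpening that perimeter estimate gets you under $6/m$.

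The paper sidesteps both issues by never separating the Minkowski step. It decomposes $B_m(p,q)$ itself into a single strip along the direction $pq$ of width $3\sqrt{2}/m$ (the cell diagonal $\sqrt{2}/m$ plus two buffer radii $\sqrt{2}/m$, so the dilation is already absorbed), together with four triangles sticking out. The strip contributes at most $(3\sqrt{2}/m)\cdot\sqrt{2}=6/m$. For the triangles, the key device is that on each side of the strip the two triangles are homothetic copies, with factors $u,v$, of a reference triangle with base $\sqrt{2}/(2m)$ and height $h\ge\delta/(2\sqrt{2})$; since the scaled heights must fit inside the unit square, $(u+v)h\le\sqrt{2}$, whence $u^2+v^2\le 2/h^2$ and each pair contributes at most $2/(m\delta)$. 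Your cone integration treats the two outward fans independently and never exploits this shared $\sqrt{2}$-budget, which is precisely what makes the constant $4$ attainable.
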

\begin{proof}
  This lemma is similar to Lemma~3 in Fabila-Monroy and Huemer paper
  \cite{fabila2017order},
  although their analogous of butterfly have different definition.
  Note that the bound holds trivially if $p$ and $q$ are in the same 
  cell ($\delta(p,q)=0$) or in adjacent cells 
  ($\delta(p,q)=1/m$). Otherwise, the butterfly $B_m(p,q)$ consists of 
  two parts: a strip $S_m(p,q)$ and the union $T_m(p,q)$ of four 
  triangles (shaded in, respectively, green and blue in 
the figure). 
 We have 

\noindent\begin{minipage}{0.4\textwidth}
 \includegraphics[page=2,width=\textwidth]{Figures}
\end{minipage}~\begin{minipage}{0.6\textwidth}

\[ \area\pth{S_m(p,q)} \le \pth{3\frac{\sqrt{2}}m} \cdot \sqrt{2} = \frac6m.\]

The four triangles come in two pairs of homothetic 
triangles, intersected with $[0,1]^2$. Each homothetic pair 
consists of images under scaling of a triangle whose basis is the half 
diagonal of a cell of length $\frac{\sqrt{2}}{2m}$ and whose height 
$h$ is at least $\frac{\delta(p,q)}{2\sqrt{2}}$
(the two kinds of triangles have blue and red boundaries in 
the figure). 
Letting $u$ and $v$ denote the scaling 
\end{minipage}
 factors, the areas of the two homothetic triangles sum to 
$\frac12(u^2+v^2)h\frac{\sqrt{2}}{2m}$.  Since the scalings turn the 
height of the reference triangles to two lengths that sum 
to\footnote{The heights are smaller than the sides and the sides are 
  inside the square $[0,1]^2$ and have disjoint projection on the line 
  $(pq)$.} at most $\sqrt{2}$, we have $(u+v)h \le\sqrt{2}$.  This 
implies that $u^2+v^2 \le\pth{\frac{\sqrt{2}}h}^2 $ and one pair of 
homothetic triangles contributes at most $\frac12 
\frac2{h^2}h\frac{\sqrt2}{2m} =\frac{\sqrt{2}}{2hm} \le 
\frac{2}{m\delta(p,q)} $. Altogether, $\area\pth{T_m(p,q)}
\le\frac{4}{m\delta(p,q)}$.  Finally $\area\pth{B_m(p,q)} \le\frac6m +
\frac{4}{m\delta(p,q)}.$
\end{proof}

\subsection{Distribution of $U(1)$}

We now analyze the distribution function of the random variable 
$U(1)$. Recall that the randomness here refers to the choice of the 
random points $p_1, p_2, \ldots p_n$, which are taken independently 
and uniformly in $[0,1]^2$.

\begin{lemma}\label{l:dU}
$\Prob{U(1) > k}  \le \ljp n^2 2^{-k}$.
\end{lemma}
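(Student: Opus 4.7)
The plan is to bound $\{U(1)>k\}$ by a union bound and then reduce each term to a butterfly-area estimate. With $m=2^k$, the characterization immediately preceding Lemma~\ref{l:ExU} says that $U(1)>k$ holds iff there exist distinct $a,b\in\{2,\ldots,n\}$ with $p_a\in B_m(p_1,p_b)$. I would therefore start from
\[\Prob{U(1)>k}\le\sum_{\substack{a,b\in\{2,\ldots,n\}\\a\neq b}}\Prob{p_a\in B_m(p_1,p_b)}.\]
By exchangeability of $p_2,\ldots,p_n$, each term equals $\Prob{p_2\in B_m(p_1,p_3)}$, and since $p_2$ is independent of $(p_1,p_3)$ and uniform on $[0,1]^2$, this in turn equals $\Ex{\area(B_m(p_1,p_3))}$. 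With $(n-1)(n-2)\le n^2$ ordered pairs, it suffices to show $\Ex{\area(B_m(p_1,p_3))}\le 57/m$.

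The area of $B_m(p_1,p_3)$ depends only on the pair of cells of $G_m$ hit by $p_1$ and $p_3$, so the expectation unfolds as $\tfrac{1}{m^4}\sum_{C,C'}\area(B_m(C,C'))$. I would split this sum according to whether $C=C'$: the $m^2$ diagonal pairs contribute at most $m^2\cdot 1$ via the trivial bound $\area\le 1$, while the off-diagonal pairs are controlled by Lemma~\ref{l:butt}, yielding
\[\Ex{\area(B_m(p_1,p_3))}\le\frac{1}{m^2}+\frac{6}{m}+\frac{4}{m^5}\sum_{C\neq C'}\frac{1}{\delta(C,C')}.\]

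To estimate the remaining lattice sum, I index cells by $(i,j)\in\{0,\ldots,m-1\}^2$, write $\delta=\tfrac{1}{m}\sqrt{(i-i')^2+(j-j')^2}$, shift variables, and use the loose bound $(m-|a|)(m-|b|)\le m^2$; this reduces the sum to $m^3\sum_{(a,b)\neq 0,\,|a|,|b|<m}\tfrac{1}{\sqrt{a^2+b^2}}$. The remaining lattice sum is bounded by a standard comparison with $\int\!\!\int \tfrac{1}{r}\,r\,dr\,d\theta$ on an annulus of outer radius $m\sqrt2$, which is $O(m)$; a careful count yields roughly $2\pi\sqrt2\,m+O(1)$. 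Substituting back gives $\Ex{\area(B_m(p_1,p_3))}\le (6+8\pi\sqrt2)/m+O(1/m^2)$, which is comfortably below $57/m$; for the small values of $m$ where the lower-order terms might otherwise matter, the bound is trivial since $\Prob{U(1)>k}\le 1\le 57n^2/m$. Multiplying by $n^2$ delivers the claim.

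The main obstacle is not conceptual — the proof is just a union bound plus a two-dimensional lattice sum — but rather bookkeeping: landing at the explicit constant $57$ requires the crude replacement $(m-|a|)(m-|b|)\le m^2$ and careful handling of the diagonal $C=C'$ case where Lemma~\ref{l:butt} is vacuous.
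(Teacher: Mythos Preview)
Your proof is correct and follows essentially the same route as the paper: a union bound over pairs reduces the question to $\Ex{\area B_m(p_1,p_3)}$, which is then controlled via Lemma~\ref{l:butt} and a lattice sum for $\sum 1/\delta$, with small $k$ dispatched by the trivial bound $\Prob{\cdot}\le 1$. The only cosmetic differences are that the paper first writes $(n-1)\,\Ex{1-(1-\area)^{n-2}}$ and then linearizes via $(1-x)^{n-2}\ge 1-(n-2)x$ (landing at the same pairwise sum you write directly), and that it estimates the lattice sum by grouping cells into $L^\infty$-layers with $8t$ cells each rather than by a polar-integral comparison.
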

\begin{proof}
  This lemma is similar to Lemma~4 in Fabila-Monroy and Huemer paper \cite{fabila2017order}. 
  In our setting, we have: 
  \begin{eqnarray*}
    \Prob{U(1) > k}  & \leq &  \Prob{\exists i,j \in \binom{[n]\setminus \{1\}}2 
      : p_j \in B_{2^{k}}(p_1,p_i)}\\
    & \le & (n-1) \Prob{\exists j \in [n]\setminus \{1,2\} : p_j \in B_{2^{k}}(p_1,p_2)}\\
    & \le & (n-1) \Ex{1-(1-\area\pth{B_{2^{k}}(p_1,p_2)})^{n-2}}. 
  \end{eqnarray*}
  The geometry of $B_{2^k}(p_1,p_2)$ depends on the distance between 
  the centers of the cells that contain $p_1$ and $p_2$. We therefore 
  condition on the cell containing $p_1$, then sum the contributions 
  of the cell containing $p_2$ by distance to the cell containing 
  $p_1$.  Accounting for boundary effects, for any $1 \le t \le 2^k$
  there are at most $8t$ cells whose center lies at a distance between 
  $t2^{-k}$ and $(t+1)2^{-k}$ from a given cell. We thus have (using Lemma~\ref{l:butt})
  \begin{eqnarray*}
    \lefteqn{ \Ex{1-(1-\area B_{2^k}(p_1,p_2))^{n-2}} }\\
    & = &\sum_{c \in \text{ cells of } G_{2^k}} \Prob{p_2\in c} \cdot \Ex{1-(1-\area B_{2^k}(p_1,p_2))^{n-2}\mid p_2\in c} \\
    & \le & \sum_{t=1}^{2^k} \frac{8t}{(2^k)^2} \pth{1-\pth{1-\pth{\frac{6}{2^{k}}+\frac{4}{2^k\cdot(t2^{-k})}}}^{n-2}}. 
  \end{eqnarray*}
  Using $(1-x)^{n-2} \ge 1-(n-2)x$ we get 
  \begin{eqnarray*}
    \Ex{1-(1-\area B_{2^k}(p_1,p_2))^{n-2}} & \le &(n-2) 2^{-2k}
    \sum_{t=1}^{2^k} 8 t \pth{6\cdot2^{-k}+\frac{4}t} \\
    & \le &n2^{-2k} \pth{\sum_{t=1}^{2^k} 48 t2^{-k}} + n 2^{-2k} 2^k 32\\
    & \le &24n2^{-k}(1+2^{-k})+ 32n2^{-k}. 
  \end{eqnarray*}
  The statement trivially bounds a probability by something greater than 1 for  $k\leq 5$. For $k \ge 6$, the final term is at most $\ljp n2^{-k}$. 
\end{proof}

\setcounter{postponedtheorem:SAVE}{\value{theorem}}
\setcounter{theorem}{\value{postponedtheorem:ExU}}
\begin{lemma}
  $\Ex{U(1)} \le 2\log n + 8$
\end{lemma}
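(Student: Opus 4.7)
The plan is to use the standard identity $\Ex{U(1)} = \sum_{k \geq 0} \Prob{U(1) > k}$, valid for any nonnegative integer-valued random variable; note that $U(1)$ is almost surely finite because with probability~$1$ no three of the $p_i$ are collinear. Combining this identity with Lemma~\ref{l:dU}, which gives $\Prob{U(1) > k} \le 57 n^2 \cdot 2^{-k}$, and with the trivial bound $\Prob{U(1) > k} \leq 1$, the proof reduces to bounding the deterministic series $\sum_{k\ge 0} \min\pth{1,\, 57 n^2 \cdot 2^{-k}}$.

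The natural strategy is to split this sum at the crossover point $k^* \approx 2\log n + \log 57$ at which $57 n^2 \cdot 2^{-k}$ first drops below~$1$. For $k < k^*$, I bound each term by~$1$, contributing roughly $2\log n + \log 57 + 1$ in total. For $k \geq k^*$, the sum becomes a geometric series $\sum_{k \ge k^*} 57 n^2 \cdot 2^{-k} = 57 n^2 \cdot 2^{-k^*+1}$, which, by the choice of $k^*$, is at most~$2$. Adding the two contributions immediately yields $\Ex{U(1)} \leq 2\log n + O(1)$.

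The only mildly delicate point is pinning down the stated constant~$8$. A clean trick is to weaken Lemma~\ref{l:dU} to $\Prob{U(1) > k} \le 64 n^2 \cdot 2^{-k}$ (legitimate since $57 < 64 = 2^6$), so that the crossover $k^* = 2\log n + 6$ falls exactly on an integer when $n$ is a power of~$2$: the prefix then contributes exactly $2\log n + 6$, the geometric tail contributes exactly~$2$, and the two add to $2\log n + 8$. A straightforward check of the $\lceil \cdot \rceil$ rounding for general~$n$ shows that the same constant works in all cases, so no genuine obstacle arises; the substantive content is already in Lemma~\ref{l:dU}, and this final step is purely bookkeeping.
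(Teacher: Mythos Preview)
Your proposal is correct and follows essentially the same route as the paper: both use the tail-sum identity $\Ex{U(1)}=\sum_{k\ge 0}\Prob{U(1)>k}$, split the sum at $k\approx 2\log n+6$, bound the prefix trivially by~$1$ per term, and control the geometric tail via Lemma~\ref{l:dU}. Your rounding of $57$ up to $64$ is a cosmetic device to make the constant~$8$ fall out cleanly (and indeed the function $f\mapsto 2^f-f$ is at most~$1$ on $[0,1]$, so the ceiling check goes through), but the argument is otherwise identical to the paper's.
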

\setcounter{theorem}{\value{postponedtheorem:SAVE}}
\begin{proof}
  By definition we have 
  \[ \Ex{U(1)} = \sum_{k=1}^{\infty} k \Prob{U(1)=k} = \sum_{k=0}^{\infty}
  \Prob{U(1)>k} .\]
  For the first $2 \log n + 6$ terms, we use the trivial upper bound of $1$ and for the remaining terms we use the upper bound of Lemma~\ref{l:dU}: 
  \[ \Ex{U(1)} \le (2 \log n+6) + \ljp n^2 \sum_{k\ge 6+2 \log n}^{\infty} 2^{-k} = (2 \log n + 6) +  \ljp n^2\cdot2^{-5-2 \log n} .\]
  Altogether it comes that $\Ex{U(1)} \le 2 \log n + 8$. 
\end{proof}

\section{Detailed proofs for Section~\ref{s:L}\label{a:L}}

\begin{minipage}{0.8\textwidth}
\setcounter{postponedtheorem:SAVE}{\value{theorem}}
\setcounter{theorem}{\value{postponedtheorem:cross}}
\begin{lemma}
  If $s\ge 10$, for any point $b \in B_i$ and $r \in R_{i+1}$, the line
  $(br)$ intersects $\downarrow_{\frac{\pi}{2s}}$.
\end{lemma}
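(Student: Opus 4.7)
My plan is to put $p_1$ at the origin and use the common boundary $\theta_0$ of the sectors of $b$ and $r$ as a reference angle. By the symmetry of the four diagonal directions, assume we are working around $(1,1)$, so that $\theta_0 \in (\pi/8, 3\pi/8)$, and write
\[
b = d_b\bigl(\cos(\theta_0-\phi_b),\sin(\theta_0-\phi_b)\bigr), \qquad r = d_r\bigl(\cos(\theta_0+\phi_r),\sin(\theta_0+\phi_r)\bigr),
\]
with $d_b \in (0, 1/5]$, $d_r \in [3/10, 2/5]$, and $\phi_b,\phi_r \in [0, \pi/(4s)]$. The goal is then to compute the $y$-intercept $y^\star$ of the line $(br)$ and verify that $-\pi/(2s) \leq y^\star \leq 0$, which is exactly the statement that $(0,y^\star) \in \downarrow_{\pi/(2s)}$.

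The cross-product formula gives $y^\star = (b_x r_y - b_y r_x)/(b_x - r_x)$. The numerator simplifies via the sine subtraction identity to $d_b d_r \sin(\phi_b + \phi_r) \geq 0$ and is thus bounded above by $d_b d_r(\phi_b + \phi_r) \leq \tfrac{1}{5}\cdot\tfrac{2}{5}\cdot\tfrac{\pi}{2s}$. Expanding the denominator via the angle-addition formulas yields
\[
b_x - r_x = (d_b\cos\phi_b - d_r\cos\phi_r)\cos\theta_0 + (d_b\sin\phi_b + d_r\sin\phi_r)\sin\theta_0,
\]
whose dominant contribution $(d_b - d_r)\cos\theta_0$ is strictly negative since $d_r - d_b \geq 1/10$ and $\cos\theta_0 > \cos(3\pi/8) > 0$, while the remaining terms are $O(1/s)$ corrections. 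Under the hypothesis $s\geq 10$, one checks that the leading term still dominates the correction, so the denominator stays negative with $|b_x - r_x|$ large enough that combining with the numerator bound gives $|y^\star| \leq \pi/(2s)$, as required.

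The main obstacle is that the constants are rather tight. The worst configuration---$d_b = 1/5$, $d_r = 3/10$, $\phi_b = \phi_r = \pi/(4s)$, and $\theta_0$ as close to $3\pi/8$ as allowed so that $\cos\theta_0$ is smallest---maximizes the numerator and minimizes the denominator simultaneously, and the threshold $s \geq 10$ is invoked precisely to prevent the $O(1/s)$ correction from swamping the constant leading term of the denominator. Everything else is elementary trigonometry, which is why the paper defers this careful bookkeeping to the appendix.
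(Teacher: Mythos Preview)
There is a genuine gap, and it is not a matter of missing bookkeeping: the concluding inequality you assert simply does not hold at the worst configuration you yourself identify.

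Take your extremal choice $d_b=\tfrac15$, $d_r=\tfrac3{10}$, $\phi_b=\phi_r=\tfrac{\pi}{4s}$, and $\theta_0$ near $3\pi/8$. The numerator is $d_bd_r\sin(\phi_b+\phi_r)\le 0.06\cdot\tfrac{\pi}{2s}$. For the denominator, the leading term you isolate is $(d_r-d_b)\cos\theta_0 \approx 0.1\cdot\cos(3\pi/8)\approx 0.038$, and the $O(1/s)$ correction is indeed smaller for $s\ge 10$---so yes, the denominator stays negative and bounded away from zero. But ``bounded away from zero'' here means about $0.038$, whereas to deduce $|y^\star|\le \tfrac{\pi}{2s}$ from your numerator bound you would need $|b_x-r_x|\ge 0.06$. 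That fails by a constant factor, not by an $O(1/s)$ correction; the threshold $s\ge 10$ cannot rescue it. Concretely, at $s=10$ with $b=0.2(\cos 58.5^\circ,\sin 58.5^\circ)\in B_{s-1}$ and $r=0.3(\cos 67.5^\circ,\sin 67.5^\circ)\in R_s$, one computes $b_x r_y-b_y r_x\approx 0.0094$, $b_x-r_x\approx -0.0103$, hence $y^\star\approx -0.91$, while $\tfrac{\pi}{2s}\approx 0.157$.

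So the sentence ``one checks that\ldots $|y^\star|\le\pi/(2s)$'' is where the argument breaks: the leading term of the denominator dominating the correction is a much weaker statement than the denominator being at least $0.06$ (or $0.08$) in absolute value, and it is the latter that you need. The paper's route is different---it fixes the extremal corner positions, introduces the foot $v$ of the vertical from $p_1$ to the line, and uses the law of sines in the triangles $vp_1b$ and $vp_1r$ to express $\theta=\widehat{bp_1r}$ as an explicit increasing function of the vertical distance $h$, then argues $h<\theta=\pi/(2s)$ from that formula---rather than bounding numerator and denominator of the intercept separately.
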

\setcounter{theorem}{\value{postponedtheorem:SAVE}}
\end{minipage}
\begin{proof}
  The vertical distance between $p_1$ and $(br)$ is maximal when $b$ and\\
\begin{minipage}{0.8\textwidth}
 $r$ are placed 
  in the corners of $B_{s-1}$ and $R_{s}$ on 
  circles of radii 0.2 and 0.3 as in left figure. 
Let  us relate this maximal distance $h$ to $\theta =
\widehat{bp_1r}$. 
With the notations of the right figure, 
considering triangle $vp_1b$ we have 
$\beta+\gamma+(\frac{7\pi}{8}-\theta)=\pi$ and deduce 
$\theta=\beta+\gamma-\frac\pi8$. 
Law of sines in the same triangle give 
$\sin\beta=\frac{pr}{vr}\sin \frac{7\pi}{8}
= \frac{0.3}{\sqrt{0.09+h^2+0.6h\sin\frac\pi8}}\sin\frac\pi8$
and 
$\sin\gamma=\frac h{pb}\sin\beta=\frac h{0.2}\sin\beta 
$. And 
we  can express $\theta$ as a function of $h$ (for $\theta$ sufficiently 
  small): 

\end{minipage}\quad\begin{minipage}{0.2\textwidth}
\vspace*{-2.5cm}\includegraphics[page=6,width=\textwidth]{Figures}
\end{minipage}

\[
\theta = \arcsin\pth{ \tfrac{0.3}{\sqrt{0.09+h^2+0.6h\sin\tfrac\pi8}}
  \sin\tfrac\pi8 } + \arcsin\pth{
  \tfrac{0.3}{\sqrt{0.09+h^2+0.6h\sin\tfrac\pi8}} \tfrac h{0.2}
  \sin\tfrac\pi8 } -\tfrac\pi8. 
\]

\noindent 
This function $h \mapsto \theta(h)$ is increasing on $[0,0.6]$ and 
$\theta(h) > h$ when $\theta(h) \in [0,0.17]$. Since $\theta$ is the 
angle of \emph{two} sectors, we have $\theta =2\frac{\pi}{4s}$. For $s 
\ge 10$ we have $h<\frac{\pi}{2s}$. 
\end{proof}

\paragraph{Computations.}

Each $B_i$ has area $c_1/s$, and each $R_i$ has area $c_2/s$ with $c_1
= \frac{\pi}{200}$ and $c_2=\frac{7\pi}{800}$. Thus, $X_{i,j}$ and
$Y_{i,j}$ are $0-1$ random variables, taking value $1$ with
probability, respectively, $c_1/s$ and $c_2/s$. For fixed $i$, the
$\{X_{i,j}\}_{j \in [n-1]}$ are independent, so we have

\[\begin{aligned}
\Ex{X_i} = \Prob{X_i = 1} = 1 - \pth{1-\frac{c_1}{s}}^{n-1} \ge 1-e^{-c_1\frac{n-1}{s}} & \ge c_1\frac{n-1}{s} - \frac12 \pth{ c_1\frac{n-1}{s}}^2 \\
&= c_1\frac{\log^x n}{n} - O\pth{\frac{\log^{2x} n}{n^2}}.
\end{aligned}\]

\noindent
the first and second inequalities coming, respectively, from the facts
 that for every $t \ge 0$ we have $1-t \le e^{-t}$ and for every $t \in
[0,1]$ we have $1-e^{-t} \ge -t-\frac{t^2}2$.
Then, we plugged in $s=\frac{n^2}{\log^x n}$.
The same computation gives
$ \Ex{Y_i} \ge c_2\frac{\log^x n}{n} - O\pth{\frac{\log^{2x} n}{n^2}}.$
\noindent
Finally, since the $X_i$ are identically distributed, and so are the
$Y_i$, we have

\[ \Ex{X} = s \Ex{X_i} \ge c_1n - O\pth{\log^{x} n} \quad \text{and} \quad \Ex{Y} = s \Ex{Y_i} \ge c_2n - O\pth{\log^{x} n}.\]

\noindent
Now, let $\OO$ denote the event that there exist $a,b$ in $[s]$ such
that each of $B_a$, $R_{a+1}$, $B_b$, $R_{b-1}$ is hit by $P$. Let us
condition by the event $\G = \{X \geq \Ex{X}/2 \hbox{\ and\ } Y \geq
\Ex{Y}/2\}$. A union bound yields

\[ \Prob{\G} \ge 1- \pth{0.89^{\Ex{X}} + 0.89^{\Ex{Y}}} \ge 1 - \pth{0.89^{c_1n - O\pth{\log^{x} n}} + 0.89^{c_2n - O\pth{\log^{x} n}}} \]

\noindent
which is exponentially close to $1$. We thus bound from below
$\Prob{\OO} \ge \Prob{\G}\Prob{\OO|\G}$ and concentrate on the
conditional probability.

\paragraph{Bichromatic birthday paradox.}

The probability $\Prob{\OO | \G}$ can be expressed as a convex
combination of the conditional probabilities $f(\beta,\rho) = \Prob{
  \OO|\G_{\beta,\rho}}$, where for integers $\beta\geq \Ex{X}/2$ and $\rho\geq
\Ex{Y}/2$ we take $\G_{\beta,\rho} = \{X=\beta, Y=\rho\}$. Conditioned on
$\G_{\beta,\rho}$, the occupied regions of each type are uniformly
random and independent, which will simplify the analysis. Furthermore,
the function $f(\beta,\rho)$ is increasing in both variables (the more
occupied regions there are, the more likely it is that the collisions
we desire occur). Thus, we concentrate on finding a lower bound on
$f(\beta,\rho)$ for $\beta=\ceil{ \frac{\Ex{X}}2}$ and
$\rho=\ceil{ \frac{\Ex{Y}}2 }$.

\bigskip

Assume the $\beta$ occupied blue regions have been chosen. Let $T_+$
(resp. $T_-$) denote the set of red regions in sectors following
counterclockwise (resp. clockwise) the sectors whose blue regions have been
chosen. Since the blue regions in the boundary angular sectors may be
among those chosen, we have $\beta-1 \le |T_+|, |T_-| \le \beta$. We now pick
the $\rho$ red regions to be occupied.
Let $E_+$ (resp. $E_-$) denote the event that a region of $T_+$
(resp. $T_-$) has been chosen among the $\rho$ red regions. Pretend, for
the sake of the analysis, that we choose the red regions one by
one. If none of the first $i$ regions chosen is in $T_+$, then next
one has to be picked from the $s-i$ unpicked regions, at least $\beta-1$
of which are in $T_+$. Thus,

\[ \begin{aligned}
  1-\Prob{E_+} \le \prod_{i=0}^{\rho-1} \pth{1-\frac{\beta-1}{s-i}} & = \frac{(s-\beta+1)(s-\beta)\ldots (s-\beta-\rho+2)}{s(s-1) \ldots (s-\rho+1)}\\
  & =   \frac{(s-\rho)!(s-\beta+1)!}{s!(s-\beta-\rho+1)!}\\
  \end{aligned}\]

\noindent
Using a symmetric argument for $T_-$ and applying a union bound, we get
\[ 1-f(\beta,\rho) \le 2 \ \frac{(s-\rho)!(s-\beta+1)!}{s!(s-\beta-\rho+1)!}.\]
\noindent
Note that for $\beta=\ceil{ \frac{\Ex{X}}2}$ and
$\rho=\ceil{\frac{\Ex{Y}}2}$,
 both $\beta$ and $\rho$ are $\Theta(n) = o(s)$. Taking
logarithm and using Simpson's approximation formula, which asserts
that $\log(N!) = N \log(N) - N + O(\log N)$, we get

\[\begin{aligned}
\log(1- f(\beta,\rho)) & = s \log \frac{(s-\rho)(s-\beta+1)}{s(s-\beta-\rho+1)} - \rho\log \frac{s-\rho}{s-\beta-\rho+1}\\
&  \hspace{4cm} -\beta \log \frac{s-\beta+1}{s-\beta-\rho+1} + O\pth{\log s}\\
& = s \log \pth{1+\frac{\rho(\beta-1)}{s(s-\beta-\rho+1)}} - \rho\log \pth{1+\frac{\beta-1}{s-\beta-\rho+1}}\\
&  \hspace{4.8cm} -\beta \log \pth{1+\frac{\rho}{s-\beta-\rho+1}} + O\pth{\log s}.\\
\end{aligned}\]

\noindent
Now in the regime we are looking at, we have $\beta=c_1n/2 - O\pth{\log^{x} n}$,
$\rho=c_2n/2 - O\pth{\log^{x} n}$, and $s=\frac{n^2}{\log^xn}$. 
Taking first order Taylor
expansions, our bound rewrites as

\[ \log(1- f(\beta,\rho)) = -\frac{\rho\beta}{s-\beta-\rho+1} + O(\log s) = -\frac{c_1c_2}{4}\log^x n + O\pth{\log n}.\]

\noindent
provided we have $x>1$. Hence, $f(\beta,\rho) = 1 -
\exp(\Theta(\log(n)^x))$. Altogether, we get that $\Prob{\OO|\G}$ is
exponentially close to $1$. Since $\Prob{\G}$ is also exponentially
close to $1$, we finally get that our event $\OO$ holds with
probability exponentially close to~$1$. With Corollary~\ref{c:anniv},
this proves Lemma~\ref{l:tailL}.

\end{document}